\newcommand{\typeof}{1} %
\newcommand{\longversion}[1]{\ifthenelse{\equal{\typeof}{0}}{}{#1}}
\newcommand{\shortversion}[1]{\ifthenelse{\equal{\typeof}{0}}{#1}{}}
\newcommand{\longshortversion}[2]{\ifthenelse{\equal{\typeof}{0}}{#2}{#1}}
\newcommand{\highlight}[1]{{{#1}}}
	\title{Contextual Behavioural Metrics (Extended 
	Version)}
	\author{Ugo Dal Lago \and Maurizio Murgia}
	\date{}
	\theoremstyle{plain}
	\newtheorem{theorem}{Theorem}[section]
	\newtheorem{proposition}[theorem]{Proposition}
	\newtheorem{lemma}[theorem]{Lemma}
	\theoremstyle{definition}
	\newtheorem{definition}[theorem]{Definition}
	\newtheorem{example}{Example}
	\newtheorem{remark}{Remark}
\newcommand{\subst}[2]{\{n\leftarrow p\}}
\newcommand{\quantaleQ}{\mathbb{Q}}
\newcommand{\quantaleB}{\mathbb{B}}
\newcommand{\automatonV}{\mathbb{V}}
\newcommand{\mbot}[1][]{\bot_{#1}}
\newcommand{\mtop}[1][]{\top_{#1}}
\newcommand{\mplus}[1][]{+_{#1}}
\newcommand{\lub}[1][]{\bigvee_{#1}}
\newcommand{\glb}[1][]{\bigwedge_{#1}}
\newcommand{\lessThan}[1][]{\leq_{#1}}
\newcommand{\lessThanStrict}[1][]{<_{#1}}
\newcommand{\greatThan}[1][]{\geq_{#1}}
\newcommand{\greatThanStrict}[1][]{>_{#1}}
\newcommand{\mequiv}[1][]{\leq\geq_{#1}}
\newcommand{\elE}{e}
\newcommand{\elEi}{e'}
\newcommand{\elF}{f}
\newcommand{\proc}{P}
\newcommand{\labels}{\mathcal{L}}
\newcommand{\states}{S}
\newcommand{\statesi}{S'}
\newcommand{\statesii}{S''}
\newcommand{\trans}[1][]{\xrightarrow{#1}}
\newcommand{\stateS}[1][]{s_{#1}}
\newcommand{\stateSi}[1][]{s'_{#1}}
\newcommand{\stateSii}[1][]{s''_{#1}}
\newcommand{\stateSiii}[1][]{s'''_{#1}}
\newcommand{\val}{\Downarrow}
\newcommand{\labelL}[1][]{\ell_{#1}}
\newcommand{\labelLi}[1][]{\ell'_{#1}}
\newcommand{\labelA}{a}
\newcommand{\labelB}{b}
\newcommand{\procP}[1][]{p_{#1}}
\newcommand{\procPi}[1][]{p'_{#1}}
\newcommand{\procQ}[1][]{q_{#1}}
\newcommand{\procQi}[1][]{q'_{#1}}
\newcommand{\procR}[1][]{r_{#1}}
\newcommand{\env}{E}
\newcommand{\setcomp}[2]{\{#1\;|\;#2\}}
\newcommand{\relR}{\mathcal{R}}
\newcommand{\metric}[1][]{\delta_{#1}}
\newcommand{\immMetric}{D}
\newcommand{\cbmap}[1][]{m_{#1}}
\newcommand{\paral}{|}
\newcommand{\esum}{+}
\newcommand{\new}{\nu}
\newcommand{\pref}{.}
\newcommand{\bang}{!}
\newcommand{\irule}[3]{\frac{#1}{#2}{\makebox[1pt][l]{#3}}}
\newcommand{\similar}{\precsim}
\newcommand{\revsimilar}{\succsim}
\newcommand{\mutsimilar}{\similar\revsimilar}
	\title{Contextual Behavioural Metrics}
	\author{Ugo {Dal Lago}}{University of Bologna, Italy \and INRIA Sophia Antipolis, France}{ugo.dallago@unibo.it}{https://orcid.org/0000-0001-9200-070X}{}
	\author{Maurizio Murgia}{Gran Sasso Science Institute, L'Aquila, 
	Italy}{maurizio.murgia@gssi.it}{https://orcid.org/0000-0001-7613-621X}{}	
	\authorrunning{Ugo Dal Lago and Maurizio Murgia}
	\keywords{Behavioural metrics, Labelled Transition Systems, Differential 
	Semantics.} 
\newcommand{\commentout}[1]{}
\begin{document}

\maketitle

\begin{abstract}
We introduce contextual behavioural metrics (CBMs) as a novel way of measuring 
the discrepancy in behaviour between processes, taking into account both 
quantitative aspects and contextual information. 
This way, process distances by construction take the environment into account: 
two (non-equivalent) processes may still exhibit very similar behaviour in some 
contexts, e.g., when certain actions are never performed. We first show how 
CBMs capture many well-known notions of equivalence and metric, including 
Larsen's environmental parametrized bisimulation. We then study compositional 
properties of CBMs with respect to some common process algebraic operators, 
namely prefixing, restriction, non-deterministic sum, parallel composition and 
replication.
\end{abstract}

\section{Introduction}

Simulation and bisimulation relations are often the methodology of choice for 
reasoning relationally about the behaviour of systems specified in the form of 
LTSs. On the one hand, most of them can be proved to be congruences, therefore 
enabling modular equivalence proofs. On the other hand, not being based on any
universal quantification (e.g. on tests or on traces), they enable 
simpler relational arguments, especially when combined with enhancements such 
as the so-called up-to techniques~\cite{PousSangiorgi2011}.

The outcome of relational reasoning as supported by (bi)simulation relations is 
inherently binary: two programs or systems are \emph{either} (bi)similar 
\emph{or not so}. As an example, all pairs of 
non-equivalent elements have the same status, i.e. the bisimulation game
gives no information on the degree of dissimilarity between 
non-equivalent states. This can be a problem in those contexts, such as 
that of probabilistic systems, in which non-equivalent states can give rise to 
completely different but also extremely similar behaviours.

This led to the introduction of a generalization of bisimulation relations, 
i.e. the so-called bisimulation \emph{metrics}~\cite{DGJP1999}, which 
rather than being 
binary relations on the underlying set of states $S$, are binary \emph{maps} 
from $S$ to a quantale (most often of real numbers) satisfying the axioms 
of (pseudo)metrics. In that context, the bisimulation game becomes inherently quantitative: the 
defender aims at proving that the two states at hand are \emph{close} to each 
other, while the attacker tries to prove that they are \emph{far apart}. The 
outcome of this game is a quantity 
representing a bound not only on any discrepancy about the \emph{immediate} 
behaviour 
of the two involved states, 
(e.g. the fact that some action is available in $s$ but not in $t$), but 
also providing some information about differences which will only show up in the 
\emph{future}, all this \emph{regardless} of the actions chosen by the 
attacker. In this sense, therefore, bisimulation metrics condense a great deal 
of information in just one number. 

Notions of bisimulation metrics have indeed be defined for various sequential 
and concurrent calculi (see, 
e.g.,~\cite{ChatzikokolakisGPX14,DJGP2002,DuDG16,FPP2004,GeblerLT16,BHMW2005}), 
allowing a form of 
metric reasoning on program 
behaviour. But when could any of such techniques be said to be compositional? 
This amounts to be able to \emph{derive} an upper bound on the distance 
$\delta(C[t],C[s])$ between two programs in the form $C[t]$ and $C[s]$ 
\emph{from} the distance $\delta(s,t)$ between $s$ and $t$. Typically, the 
latter is required to be itself an upper bound on the former, giving rise to 
\emph{non-expansiveness} as a possible generalization of the notion of a 
congruence. This, however, significantly \emph{restricts} the class of 
environments $C$ to which the aforementioned analysis can be applied, since 
being able to amplify differences is a very natural property of processes. 
Indeed, an inherent tension exists between expressiveness and 
compositionality in metric reasoning~\cite{GeblerLT16}.

But there is another reason why behavioural metrics can be seen as less 
informative 
than they could be. As already mentioned, any  
number measuring the distance between two states $s$ and $t$ implicitly
accounts for all the possible ways of comparing $s$ and $t$, i.e. any 
context. Often, however, only contexts that act in a certain very specific way could 
highlight large differences between $s$ and $t$, while others might simply see 
$s$ and $t$ as very similar, or even equivalent. This further dimension is 
abstracted away in compositional metric analysis: if the distance between $s$ 
and $t$ is very high, but $C$ does not ``take advantage'' of such large 
differences, $C[s]$ and $C[t]$ should be \emph{close} to each other, but are 
dubbed being \emph{far away} from each other, due to the aforementioned 
abstraction step. It is thus natural to wonder whether metric analysis can be 
made contextual. In the realm of process equivalences, this is known to be 
possible through, e.g. Larsen's environmental parametrized 
bisimulation~\cite{Larsen87}, but not much is known about contextual 
enhancements of bisimulation \emph{metrics}. Other notions of program 
equivalence, like logical relations or denotational semantics, have been shown 
to have metric analogues~\cite{AGHKC17,ReedPierce2010}, which in some cases can 
be made contextual~\cite{GeoffroyPistone2021,LagoGY19}.

In this paper, we introduce the novel notion of \emph{contextual behavioural 
metric} (CBM 
in the following) through which it is possible to fine-tune the abstraction 
step mentioned above and which thus represents a refinement over behavioural 
metrics. In CBMs, the distance between two states $s,t$ of an LTS is measured 
by an object $d$ having a richer structure than that of a number. 
Specifically, $d$ is taken to be an element of a metric
transition system, in which the contextual and temporal dimensions of the 
differences can be taken into account. In addition to the mere introduction of this new notion of distance, our contributions are threefold:
\begin{itemize}
	\item
	On the one hand, we show that metric labelled transition systems (MLTSs in 
	the following), namely the kind of structures meant to model differences, 
	are indeed quantales, this way allowing us to prove that CBMs are 
	generalized metrics. This is in 
	Section~\ref{sect:CBMs}.
	\item
	On the other hand, we prove that some well-known methodologies for 
	qualitative and 
	quantitative relational reasoning on processes, namely (strong) 
	bisimulation 
	relations and metrics, and environmental parametrized 
	bisimulations~\cite{Larsen87}, can all be 
	seen as CBMs where the underlying MLTS corresponds to the 
	original quantale. This is in Section~\ref{sect:Examples}.
	\item
	Finally, we prove that CBMs have some interesting 
	compositional properties, and that this allows one to derive approximations 
	to the distance between processes following their syntactic structure. This 
	is in Section \ref{sect:Composition}.
\end{itemize}

Many of the aforementioned works about behavioural metrics are concerned with 
probabilistic forms of LTSs. In this work, instead, we have deliberately chosen 
to focus on usual nondeterministic transition systems. On the one hand, the 
quantitative aspects can be handled through the so-called \emph{immediate} 
distance between states, see below. On the other hand, it is well known that 
probabilistic transition systems can be seen as (non)deterministic systems  
whose underlying reduction relation is defined between state distributions. 
Focusing on ordinary LTSs has the advantage of allowing us to 
concentrate our attention on those aspects related to metrics, allowing for a 
separation of concerns. This being said, we are confident that most of the results 
described here could hold for probabilistic LTSs, too.


\section{Why the Environment Matters}\label{sect:Why}
The purpose of this section is to explain why purely numerical quantales do 
\emph{not} precisely capture differences between states of an LTS and how 
a more structured approach to distances can be helpful to tackle this 
problem. We will do this through an example drawn from the realm of 
higher-order programs, the latter seen as states of the LTS induced by 
Abramsky's applicative bisimilarity~\cite{Abramsky90}.

Let us start with a pair of programs written in a typed 
$\lambda$-calculus, both of them having type 
$(\mathtt{Nat}\rightarrow\mathtt{Nat})\rightarrow\mathtt{Nat}$, namely
$M_2$ and $M_4$, where $M_n\triangleq\lambda x.xn$. These terms can indeed be 
seen as states of an LTS, whose relevant fragment is the following one:
\vspace{-0.15cm}
\begin{center}
	\includegraphics{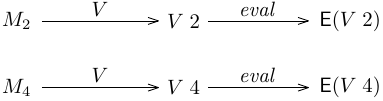}
\end{center}
\vspace{-0.3cm}
Labelled transitions correspond to either parameter passing (each actual 
parameter being captured by a distinct label $V$) or evaluation. It 
is indeed convenient to see the underlying LTS as a bipartite structure 
whose states are either computations or values. The two states 
$\mathsf{E}(V\;2)$ and $\mathsf{E}(V\;4)$ are the natural number values to 
which $V\;2$ and $V\;4$ evaluate, respectively. Clearly, the latter are not to 
be considered equivalent whenever different, and this can be captured, e.g., by 
either exposing the underlying numerical value through a labelled 
self-transition or 
by stipulating that base type values, contrary to higher-order values, can 
be explicitly observed, thus being equivalent precisely when equal. If one 
plays the 
bisimulation game on top of this LTS, the resulting 
notion of equivalence turns out to be precisely Abramsky's applicative 
bisimilarity. For very good reasons, $M_2$ and $M_4$ are dubbed as \emph{not} 
equivalent: they can be separated by feeding, e.g. $V=\lambda x.x$ to them.

But now, \emph{how far apart} should $M_2$ and $M_4$ be?  The answer provided 
by behavioural metrics consists in saying that $M_2$ and $M_4$ are at distance 
\emph{at most} $x\in\mathbb{R}^\infty_+$ iff $x$ is an upper bound on the 
differences any adversary observes while interacting with them, 
\emph{independently} on 
how the adversary behaves. As a consequence, if the underlying 
$\lambda$-calculus provides a primitive for multiplication, then it is indeed 
possible to define values of the form $V_n\triangleq\lambda x.x\times n$ for 
every $n$, allowing the environment to observe arbitrarily large differences of
the form 
$$
\mid\mathsf{E}(V_n\;2)-\mathsf{E}(V_n\;4)\mid\;=\;\mid 2n-4n\mid\;=2n.
$$
In other words, the distance between $M_2$ and $M_4$ is $+\infty$. The 
possibility of arbitrarily amplifying distances is well-known, and 
can be tackled, e.g., by switching to a calculus in which \emph{all} 
functions are \emph{non-expansive}, ruling out terms such as $V_n$ where $n>1$. 
In other words, the distance between $M_2$ and $M_4$ is 
indeed $2$, because no input term $V$ can ``stretch'' the distance 
between $2$ to $4$ to anything more than $2$. This is what happens, e.g., in 
$\mathsf{Fuzz}$~\cite{ReedPierce2010}.

But is this the end of the story? Are we somehow losing too much information 
by stipulating that $M_2$ and $M_4$ are, say, at distance $2$? Actually, the 
only moment in which the environment observes the state with which it is 
interacting is at the end of the dialogue, namely after feeding it with a 
function $V:\mathtt{Nat}\rightarrow\mathtt{Nat}$. If, for example, the 
environment picks $V_q\triangleq\lambda x.(x-3)^2+2$, then the observed 
difference is $0$, while if it picks $V_l\triangleq\lambda x.x+2$ then the 
observed distance is maximal, i.e. $2$. In other words, the observed distance 
strictly depends on how the environment behaves and should arguably be 
parametrised on it. This is indeed the main idea 
behind Larsen's environmental parametrised bisimulation, but also behind our
contextual behavioural metrics. In the latter, differences can be faithfully 
captured by the states of \emph{another} labelled transition system, called a 
metric labelled transition system, in which 
observed distances are associated to states. In our example, the difference 
between $M_2$ and $M_4$ \emph{is} the state $s$ of a metric labeled 
transition system whose relevant fragment is:
\vspace{-0.3cm}
\begin{center}
	\includegraphics{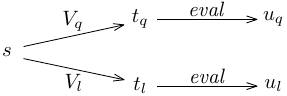}
\end{center}
\vspace{-0.3cm}
Crucially, while $s,t_s,t_l,u_l$ are all mapped to the null observable 
difference, $u_q$ is associated to $2$. This allows to discriminate between 
those environments which are able to see large differences from those which are 
not. This is achieved by allowing differences to be modelled by the states of a 
transition system \emph{themselves}. Using a categorical jargon, it looks 
potentially useful, but also very 
tempting, to impose the structure of a 
coalgebra to the underlying space of distances rather than taking it as a 
monolithical, numeric, quantale. The rest of this paper can be seen as an 
attempt to make this idea formal.

\section{Contextual Behavioural Metrics, Formally}\label{sect:CBMs}

This section is devoted to introducing contextual behavioural metrics, namely 
the concept we aim at studying in this paper. We start with the definition of 
quantale \cite{quantalesBook}, the canonical codomain of generalized metrics 
\cite{Lawvere73}. The notion of quantale used in this paper is that of unital 
integral commutative quantale:

\begin{definition}[Quantale]
A \emph{quantale} is a structure 
$\quantaleQ = (Q,\glb,\lub,\mbot,\mtop,\mplus)$ such
that $\glb,\lub: 2^Q \to Q$, the two objects $\mbot,\mtop$ are in $Q$, and 
$\mplus$ is a binary operation on $Q$, where:
\begin{itemize}
\item $(Q,\glb,\lub,\mbot,\mtop)$ is a complete lattice;
\item $(Q,\mplus,\mbot)$ is a commutative monoid;
\item for every $\elE \in Q$ and every $A \subseteq Q$ it holds that $\elE 
\mplus \glb A = \glb \setcomp{\elE \mplus \elF}{\elF \in A}$.
\end{itemize}
We write $\elE \lessThan \elF$ when $\elE = \glb \{\elE,\elF\}$.
\end{definition}

Generalized metrics are maps which associate an element of a given quantale to 
each pair of elements. As customary in behavioural metrics, we work with 
\emph{pseudo}metrics, in which distinct elements may be at minimal 
distance:
\begin{definition}[Metrics]
A \emph{pseudometric} over a set $A$ with values in a quantale $\quantaleQ$ is 
a map $m: A \times A \to \quantaleQ$ satisfying:
\begin{itemize}
\item for all $a \in A: m(a,a) = \mbot$;
\item for all $a,b \in A: m(a,b) = m(b,a)$;
\item for all $a,b,c \in A: m(a,c) \lessThan m(a,b) \mplus m(b,c)$.
\end{itemize}
In the rest of this paper, we refer to pseudometrics simply as metrics.
\end{definition}

It is now time to introduce our notion of a \emph{process}, namely of the 
computational objects we want to compare. We do not fix a syntax, and work 
with abstract labelled transition systems (LTSs in the following). In order to 
enable 
(possibly quantitative) metric reasoning, we equip states of our LTS with an 
immediate metric $\immMetric$, namely a metric measuring the observable 
distance between two states.
\begin{definition}[Process LTS]
We define a $\quantaleQ$-LTS as a quadruple $(\proc,\labels,\trans,\immMetric)$ where:
\begin{itemize}
\item $\proc$ is the set of processes;
\item $\labels$ is the set of labels;
\item $\trans\;\subseteq\;\proc \times \labels \times \proc$ is the transition 
relation;
\item $\immMetric: \proc \times \proc \to \quantaleQ$ is a metric.
\end{itemize}
\end{definition}
\begin{example}
The example LTS from Section~\ref{sect:Why} should be helpful in understanding 
why the metric $\immMetric$ is needed: terms and values of distinct types are 
at maximal immediate distance, while terms and values of the same type are at 
minimal distance, except when the type is $\mathtt{Nat}$, whereas the immediate 
distance is just the absolute value between the two numbers.
\end{example}

We now need to introduce \emph{another} notion of transition system, this time meant to 
model \emph{differences} between computations. This kind of structure can be 
interpreted as a quantale, and will form the codomain of Contextual 
Bisimulation Metrics. Intuitively, a Metric LTS is an LTS endowed with a 
function from states to a quantale $\quantaleQ$. This allows to keep track of 
immediate distance changes. Let us start with the notion of a \emph{pre}-metric 
LTS:
\begin{definition}[Pre-metric LTS]\label{def:PMLTS}
A \emph{pre-metric $\quantaleQ$-LTS} is a quadruple 
$\automatonV=(\states,\labels,\trans,\val)$
where:
\begin{itemize}
\item $\states$ is the set of states;
\item $\labels$ is the set of labels;
\item $\trans\;\subseteq\;\states \times \labels \times \states$ is the 
transition relation;
\item $\val\;:\states \to \quantaleQ$ is a function which assigns values 
in $\quantaleQ$ to states in $\states$.
\end{itemize}
\end{definition}

A pre-metric LTS does not necessarily form a quantale, because $\states$ does not
necessarily have, e.g. the structure of a monoid or a lattice. In order to 
be proper codomains for metrics, pre-metric LTSs need to be endowed with some 
additional structure, which will be proved to be enough to form a quantale. 
\begin{definition}[Metric LTS]\label{def:MLTS}
A metric $\quantaleQ$-LTS $\automatonV = (\states,\labels,\trans,\val)$ is a 
pre-metric $\quantaleQ$-LTS endowed with two elements
$\mbot[\automatonV],\mtop[\automatonV] \in \states$, and three
operators
$\glb[\automatonV],\lub[\automatonV]:2^{\states} \to \states$ and 
$\mplus[\automatonV]: \states \times \states \to \states$, 
where the conditions hold for all possible values of the involved 
metavariables:
\[\arraycolsep=15pt
\begin{array}{cc}
\mbot[\automatonV] \trans[\labelL] \stateS \iff
\stateS = \mbot[\automatonV]&
 \val \mbot[\automatonV] = \mbot[\quantaleQ]\\[5pt]
\forall \labelL \in \labels: \mtop[\automatonV] \not\trans[\labelL]  &
 \val \mtop[\automatonV] = \mtop[\quantaleQ]\\[5pt]
\glb[\automatonV] \statesi \trans[\labelL] \stateS \iff \exists \stateSi \in \statesi:
\stateSi \trans[\labelL] \stateS&\val \glb[\automatonV] \statesi = \glb[\quantaleQ] 
\setcomp{\val \stateS}{\stateS \in \statesi}\\[5pt]
\lub[\automatonV] \statesi \trans[\labelL] \stateS \iff \exists \statesii:\stateS = \lub[\automatonV] \statesii\;\text{and}\;
& \val \lub[\automatonV] \statesi = \lub[\quantaleQ] \setcomp{\val \stateS}{\stateS \in \statesi}\\
\exists\;\text{surjective}\;f: \statesi \to \statesii: \forall \stateSi \in \statesi:  \stateSi \trans[\labelL] f(\stateSi)\\[5pt]
\stateS[1] \mplus[\automatonV] \stateS[2] \trans[\labelL] \stateSi
 \iff \stateSi = \stateSi[1] \mplus[\automatonV] \stateSi[2]\;\text{for some}\;\stateSi[1],\stateSi[2]
 &
\val{(\stateS[1] \mplus[\automatonV] \stateS[2])} = \val{\stateS[1]} 
\mplus[\quantaleQ] \val{\stateS[2]}\\
 \text{such that: }
\stateS[1] \trans[\labelL] \stateSi[1] \;\text{and}\; \stateS[2] \trans[\labelL] \stateSi[2]
\end{array}
\]
\end{definition}
\highlight{Axioms ensures that $\mbot[\automatonV]$ allows every possible behaviour (somehow 
capturing every context), and dually $\mtop[\automatonV]$ disallows every behaviour.
$\glb[\automatonV]\statesi$ allows all and only the behaviours in $\statesi$ 
(union of contexts), while $\lub[\automatonV] \statesi$ enables all and only the
behaviours allowed by \emph{every} element in $\statesi$ (intersection of contexts). 
The sum $\mplus[\automatonV]$ has a behaviour similar to $\lub[\automatonV]$, but it is
binary and differs on the value returned by $\val$.}

\begin{remark}[On The Existance Of Non-Trivial MLTSs]
Due to the requirements about joins and meets over potentially infinite sets, MLTSs are not easy to define directly. We argue, however, that an MLTS can be defined as the closure of a pre-MLTS. If the underlying quantale $\quantaleQ$ is boolean, one can get the desired structure by considering $2^{2^{X}}$, where $X$ is the carrier of the given pre-MLTS: it suffices to take subsets in ``conjunctive'' normal form. For the general case, the class $\cup_{n\in\mathbb{N}}{\overbrace{2^{\hdots^{2^{X}}}}^{n\;\text{times}}}$, which is indeed a set in ZFC, suffices.
\end{remark}

The axiomatics above is still not sufficient to give the status of a quantale 
to $\quantaleQ$-MLTSs. The reason behind all this is that there could be 
equivalent but distinct states in $\states$. We then define a preorder 
$\lessThan[\automatonV]$ on the states of any MLTS $\automatonV$:  
\begin{definition}
A relation $\relR \subseteq \states \times \states$ is a 
\emph{$\lessThan[\quantaleQ]$-preserving simulation}\footnote{Technically, it 
is a reverse simulation. We call it simulation for brevity.} 
if, whenever $\stateS[1]\; \relR\; \stateS[2]$, it holds that:
\begin{enumerate}
\item $\val 
\stateS[1]\;\lessThan[\quantaleQ]\;\val\stateS[2]$;\label{def:lessThan:item1}
\item  $\forall \labelL \in \labels: \stateS[2] \trans[\labelL] \stateSi[2] \implies
\exists \stateSi[1]: \stateS[1] \trans[\labelL] \stateSi[1] 
\;\text{and}\;\stateSi[1] \relR\stateSi[2]$.\label{def:lessThan:item2}
\end{enumerate}
We define $\lessThan[\automatonV] \subseteq \states \times \states$ as the largest 
$\lessThan[\quantaleQ]$-preserving simulation. We use the notation 
$\mequiv[\automatonV]$ for mutual $\lessThan[\quantaleQ]$-preserving simulation, that is $\mequiv[\automatonV] = \lessThan[\automatonV] \cap \greatThan[\automatonV]$.
We say that $\stateS$ is a lower (resp. upper) bound of $\statesi \subseteq \states$ 
if $\stateS \lessThan[\automatonV] \stateSi$ 
(resp. $\stateSi \lessThan[\automatonV] \stateS$) for all $\stateSi \in \statesi$.
\end{definition}

The forthcoming result states that, in general, MLTSs \emph{almost} form 
quantales. We can recover a proper quantale by quotienting $\states$ modulo 
$\mequiv[\automatonV]$.
  
\begin{proposition}[Properties of MLTSs]\label{lem:mlts-properties}
Let $\automatonV = (\states,\labels,\trans,\val)$ be a MLTS. 
Then:
\begin{enumerate}
\item $\lessThan[\automatonV]$ is a preorder relation;
\item For all $\stateS$: $\mbot[\automatonV] \lessThan[\automatonV] \stateS$ and
$ \stateS \lessThan[\automatonV] \mtop[\automatonV]$;
\item
For all $\statesi \subseteq \states$:
$\glb[\automatonV]  \statesi$ is a lower bound of $\statesi$, and if $\stateSi$ is a lower bound of 
$\statesi$ then $\stateSi\lessThan[\automatonV] \glb[\automatonV]  \statesi$.
\item
For all $\statesi \subseteq \states$:
$\lub[\automatonV]  \statesi$ is an upper bound of $\statesi$, and if $\stateSi$ is an upper bound of 
$\statesi$ then $\glb[\automatonV]  \statesi \lessThan[\automatonV] \stateSi$.
\item
For all $\stateS \in \states, \statesi \subseteq \states: \stateS \mplus[\automatonV] \glb[\automatonV] \statesi 
\mequiv[\automatonV] \glb[\automatonV] \setcomp{\stateS \mplus \stateSi}{\stateSi \in \statesi}$.
\item
For all $\stateS \in \states: \stateS \mplus[\automatonV] \mbot \mequiv[\automatonV] \stateS$.
\item
For all $\stateS,\stateSi \in \states: \stateS \mplus[\automatonV] \stateSi \mequiv[\automatonV] 
\stateSi \mplus \stateS$.
\item
For all $\stateS,\stateSi,\stateSii \in \states: (\stateS \mplus[\automatonV] \stateSi) \mplus[\automatonV] \stateSii \mequiv[\automatonV] 
\stateSi \mplus[\automatonV] (\stateS \mplus[\automatonV] \stateSii)$.
\item
If $\lessThan[\automatonV]$ is a partial order relation, then 
$\automatonV$ is a quantale.
\end{enumerate}
\end{proposition}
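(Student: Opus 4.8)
The plan is to exploit coinductively that $\lessThan[\automatonV]$ is the \emph{largest} $\lessThan[\quantaleQ]$-preserving simulation: to establish $\stateS[1] \lessThan[\automatonV] \stateS[2]$ it suffices to exhibit \emph{some} $\lessThan[\quantaleQ]$-preserving simulation $\relR$ with $\stateS[1] \relR \stateS[2]$, and to establish $\stateS[1] \mequiv[\automatonV] \stateS[2]$ it suffices to do so in each direction. Every clause is then proved by writing down the ``right'' candidate relation, checking the two simulation conditions (the value condition via the corresponding quantale law transported through the $\val$-equations of Definition~\ref{def:MLTS}, the transfer condition via the transition characterisations of the same definition), and reading off the continuation states, which will always already lie in the chosen relation.

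For item~1, reflexivity follows since the identity relation is a simulation (reflexivity of $\lessThan[\quantaleQ]$ and trivial transfer), and transitivity since the relational composition of two $\lessThan[\quantaleQ]$-preserving simulations is again one (transitivity of $\lessThan[\quantaleQ]$); maximality then gives $\lessThan[\automatonV]\circ\lessThan[\automatonV]\subseteq\lessThan[\automatonV]$. For item~2, the relation $\setcomp{(\mbot[\automatonV],\stateS)}{\stateS\in\states}$ is a simulation because $\val\mbot[\automatonV]=\mbot[\quantaleQ]$ is least and $\mbot[\automatonV]$ answers any move by $\mbot[\automatonV]\trans[\labelL]\mbot[\automatonV]$, staying inside the relation; dually $\setcomp{(\stateS,\mtop[\automatonV])}{\stateS\in\states}$ works because $\val\mtop[\automatonV]=\mtop[\quantaleQ]$ is greatest and $\mtop[\automatonV]$ has no outgoing moves, so transfer is vacuous.

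Items~3 and~4 give the lattice structure. For the meet, that $\glb[\automatonV]\statesi$ is a lower bound uses the $\glb[\automatonV]$-transition axiom to let $\glb[\automatonV]\statesi$ copy any move of a given element of $\statesi$, landing on an identical continuation; that it is greatest uses that a lower bound $\stateS[2]$ can answer each move of $\glb[\automatonV]\statesi$ (which by the axiom comes from some $\stateSj\in\statesi$ with $\stateS[2]\lessThan[\automatonV]\stateSj$), continuations being again related by $\lessThan[\automatonV]$; the value conditions reduce to $\glb[\quantaleQ]$ being the greatest lower bound. The join is the delicate case, and this is where I expect the main obstacle. The upper-bound half is the simulation $\setcomp{(\stateSi,\lub[\automatonV]\statesiii)}{\stateSi\in\statesiii}$ read off directly from the join axiom. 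For the least-upper-bound half, given an upper bound $\stateS[2]$ of $\statesi$ and a move $\stateS[2]\trans[\labelL]\stateSi[2]$, I must produce a matching move of $\lub[\automatonV]\statesi$; by the join axiom such a move is determined by a surjection onto a set of successors, so I must \emph{choose}, for every $\stateS\in\statesi$, a successor $\stateS\trans[\labelL]g(\stateS)$ with $g(\stateS)\lessThan[\automatonV]\stateSi[2]$ (possible since $\stateS\lessThan[\automatonV]\stateS[2]$), let the target be $\lub[\automatonV]$ of the image, and observe that $\stateSi[2]$ is again an upper bound of that image. Carrying the relation $\setcomp{(\lub[\automatonV]\statesiii,\stateS)}{\stateS\text{ an upper bound of }\statesiii}$ coinductively is what makes this go through, and the appeal to the Axiom of Choice to define $g$ is the one genuinely non-routine point.

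The algebraic laws 5--8 are all of the same ``syntactic'' shape: on both sides every move is forced by the $\mplus[\automatonV]$/$\glb[\automatonV]$ axioms to decompose into component moves, so that the two sides reach continuations built from the \emph{same} component successors but rearranged by the relevant monoid rule; hence a relation consisting of all such paired rearrangements (plus the identity where needed) is closed under transfer, and the value conditions follow from the matching quantale identities---neutrality of $\mbot[\quantaleQ]$ for~6, commutativity for~7, the monoid laws for~8, and the distributivity axiom $\elE\mplus\glb A=\glb\setcomp{\elE\mplus\elF}{\elF\in A}$ for~5, all transported through the $\val$-equations of Definition~\ref{def:MLTS}. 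Finally, item~9 is assembly: when $\lessThan[\automatonV]$ is antisymmetric, $\mequiv[\automatonV]$ collapses to equality, so items~2--4 make $(\states,\glb[\automatonV],\lub[\automatonV],\mbot[\automatonV],\mtop[\automatonV])$ a complete lattice and items~5--8 genuine equations. It remains only to note that commutativity (7) together with the permuted associativity (8) entails ordinary associativity---indeed $\stateS[1]\mplus(\stateS[2]\mplus\stateS[3]) = (\stateS[2]\mplus\stateS[3])\mplus\stateS[1] = \stateS[3]\mplus(\stateS[2]\mplus\stateS[1]) = \stateS[3]\mplus(\stateS[1]\mplus\stateS[2]) = (\stateS[1]\mplus\stateS[2])\mplus\stateS[3]$---so that $(\states,\mplus[\automatonV],\mbot[\automatonV])$ is a commutative monoid; with item~5 this yields all the quantale axioms.
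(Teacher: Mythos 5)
Your proposal is correct and follows essentially the same route as the paper: the paper only writes out item~4, and your treatment of it uses the very same two candidate relations (pairs $(\stateS,\lub[\automatonV]\statesi)$ with $\stateS\in\statesi$ for the upper-bound half, and pairs $(\lub[\automatonV]\statesi,\stateS)$ with $\stateS$ an upper bound for minimality) together with the same choice-of-successors surjection to build the matching $\lub[\automatonV]$-move. The remaining items, which the paper leaves implicit, are handled by exactly the coinductive exhibit-a-simulation method the paper intends, with the value conditions discharged by the corresponding quantale laws.
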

\longversion{
\begin{proof}
We only show item 4, which is the more involved. So, define 
$\relR \subseteq \states \times \states$ as follows:
\[
\relR = \setcomp{(\stateS,\lub[\automatonV] \statesi)}{\statesi \subseteq \states, \stateS \in \statesi}
\]
We now show that $\relR$ is a $\lessThan[\quantaleQ]$-preserving simulation. So,
let $\stateS\; \relR \lub[\automatonV] \statesi$. Condition 
$\val \stateS \lessThan[\quantaleQ] \val \lub \statesi$
follows from the fact that $\stateS \in \statesi$. For \cref{def:lessThan:item2}, 
suppose $\lub[\automatonV] \statesi \trans[\labelL] \stateS[1]$. By definition of $\lub[\automatonV]$, 
we have that
$\stateS[1] = \lub[\automatonV] \statesii$ for some $\statesii \subseteq \states$ such that 
for all $\stateSi \in \statesi$ there is $\stateSii \in \statesii$
such that $\stateSi \trans[\labelL] \stateSii$. Therefore $\stateS \trans[\labelL] \stateSi$
for some $\stateSi \in \statesii$, and hence $\stateSi \relR \lub[\automatonV] \statesii$. Since
$\lessThan[\automatonV]$ is the largest $\lessThan[\quantaleQ]$-preserving simulation,
we can conclude that $\lub[\automatonV]  \statesi$ is an upper bound of $\statesi$
for all $\statesi \subseteq \states$, as required. 

It remains to show that, for all $\statesi \subseteq \states: \lub[\automatonV] \statesi$ is minimal
among the upper bounds of $\statesi$. So, define 
$\relR \subseteq \states \times \states$ as follows:
\[
\relR = \setcomp{(\lub[\automatonV] \statesi,\stateS)}{\statesi \subseteq \states, \stateS\; 
\text{upper bound of}\; \statesi}
\]
We wish to prove that $\relR$ is a $\lessThan[\quantaleQ]$-preserving simulation.
So, let $\statesi \subseteq \states$ and let $\stateS$ be an upper bound of $\statesi$.
Condition $\val \lub[\automatonV] \statesi \lessThan[\quantaleQ] \val \stateS$ holds by definition.
For \cref{def:lessThan:item2}, suppose $\stateS \trans[\labelL] \stateSi$.
Since $\stateS$ is an upper bound, we have that for all $\stateSii \in \statesi$ there is
$\stateSiii \lessThan[\automatonV] \stateSi$ such that $\stateSii \trans[\labelL] \stateSiii$. Then
define $f$ so that it assigns one such $\stateSiii$ to each $\stateSii \in \statesi$,
and let $\statesii$ be the image of $f$. We have that $\lub[\automatonV] \statesi \trans[\labelL]
\lub[\automatonV] \statesii$. Since $\stateSi$ is an upper bound of $\statesii$, we have that
$\lub[\automatonV] \statesii\; \relR\; \stateSi$, as required.
\end{proof}
}

Unless stated otherwise, we assume that every MLTS $\automatonV$
we work with is a quantale. 

\begin{definition}[Contextual Behavioural Metrics]\label{def:cbm}
Let $(\proc,\labels,\trans,\immMetric)$ and 
$\automatonV = (\states,\labels,\trans,\val)$ be, respectively, a 
$\quantaleQ$-LTS and a 
$\quantaleQ$-MLTS. Then, 
a map $\cbmap: \proc \times \proc \to \states$ is a \emph{contextual 
bisimulation map} if:
\begin{enumerate}
\item 
$\immMetric(\procP,\procQ)\;\lessThan[\quantaleQ]\;\;\val\cbmap(\procP,\procQ)$;\label{def:cbm:item1}
\item\label{def:cbm:item2}
if $\cbmap(\procP,\procQ) \trans[\labelL] \stateSi$, then the following holds:
\begin{enumerate}
\item\label{def:cbm:item2:1}
$\procP \trans[\labelL] \procPi \implies \exists \procQi: \procQ \trans[\labelL] \procQi
\;\text{and}\; \cbmap(\procPi,\procQi) \lessThan[\automatonV] \stateSi$;
\item\label{def:cbm:item2:2}
$\procQ \trans[\labelL] \procQi \implies \exists \procPi: \procP \trans[\labelL] \procPi
\;\text{and}\; \cbmap(\procPi,\procQi) \lessThan[\automatonV] \stateSi$.
\end{enumerate}
\end{enumerate}
We say that $\cbmap$ is a \emph{contextual bisimulation metric} (CBM) if $\cbmap$ is 
both a contextual
bisimulation map and a metric. We define the contextual bisimilarity map 
$\metric$ as follows: 
\[
\metric(\procP,\procQ) = \glb[\automatonV]\setcomp{\cbmap(\procP,\procQ)}
{\cbmap\;\text{is a contextual bisimulation map}}
\]
\end{definition}
The following result states that the contextual bisimilarity map is well 
behaved, being a contextual bisimulation map upper bounding any other such map:
\begin{lemma}
$\metric$ is a contextual bisimulation map. Moreover, for all contextual 
bisimulation maps $\cbmap$, and processes $\procP,\procQ$, it holds that
$\metric(\procP,\procQ) \lessThan[\automatonV] \cbmap(\procP,\procQ)$. 
\end{lemma}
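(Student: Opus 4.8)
The plan is to establish the second (``moreover'') claim first, since it feeds into the first, and then to verify the two clauses of \cref{def:cbm} for $\metric$ in turn. For the ordering claim, fix a contextual bisimulation map $\cbmap$ and processes $\procP,\procQ$. The element $\cbmap(\procP,\procQ)$ belongs to the set $\setcomp{\cbmap(\procP,\procQ)}{\cbmap\;\text{is a contextual bisimulation map}}$ over which the meet defining $\metric(\procP,\procQ)$ is taken, so by \cref{lem:mlts-properties}(3), which says $\glb[\automatonV]\statesi$ is a lower bound of $\statesi$, we immediately obtain $\metric(\procP,\procQ) \lessThan[\automatonV] \cbmap(\procP,\procQ)$. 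Note the set is nonempty: the constant map $(\procP,\procQ)\mapsto\mtop[\automatonV]$ is a contextual bisimulation map, since $\val\mtop[\automatonV]=\mtop[\quantaleQ]$ dominates every immediate distance and $\mtop[\automatonV]$ has no outgoing transitions, making \cref{def:cbm:item2} vacuous.

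For \cref{def:cbm:item1}, I would invoke the valuation axiom $\val\glb[\automatonV]\statesi = \glb[\quantaleQ]\setcomp{\val\stateS}{\stateS\in\statesi}$ of an MLTS, which gives $\val\metric(\procP,\procQ) = \glb[\quantaleQ]\setcomp{\val\cbmap(\procP,\procQ)}{\cbmap}$. Since each $\cbmap$ satisfies $\immMetric(\procP,\procQ)\lessThan[\quantaleQ]\val\cbmap(\procP,\procQ)$, the element $\immMetric(\procP,\procQ)$ is a lower bound of this family in $\quantaleQ$, hence below its meet; that is, $\immMetric(\procP,\procQ)\lessThan[\quantaleQ]\val\metric(\procP,\procQ)$, as required.

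The crux is \cref{def:cbm:item2}, where the key tool is the transition axiom $\glb[\automatonV]\statesi \trans[\labelL] \stateS \iff \exists\stateSi\in\statesi: \stateSi\trans[\labelL]\stateS$. Assume $\metric(\procP,\procQ)\trans[\labelL]\stateSi$. Writing $\metric(\procP,\procQ)$ as the meet above, this axiom yields a \emph{single} contextual bisimulation map $\cbmap[0]$ with $\cbmap[0](\procP,\procQ)\trans[\labelL]\stateSi$ (the same target $\stateSi$). Now suppose $\procP\trans[\labelL]\procPi$. Applying \cref{def:cbm:item2:1} to $\cbmap[0]$ produces some $\procQi$ with $\procQ\trans[\labelL]\procQi$ and $\cbmap[0](\procPi,\procQi)\lessThan[\automatonV]\stateSi$. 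Combining this with the already-established ordering $\metric(\procPi,\procQi)\lessThan[\automatonV]\cbmap[0](\procPi,\procQi)$ and transitivity of $\lessThan[\automatonV]$ (\cref{lem:mlts-properties}(1)) gives $\metric(\procPi,\procQi)\lessThan[\automatonV]\stateSi$, which is exactly \cref{def:cbm:item2:1} for $\metric$. The case \cref{def:cbm:item2:2} is symmetric.

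I do not expect a genuine obstacle; the only point demanding care is the behaviour of transitions out of a meet. Because $\glb[\automatonV]$ fires a label exactly when \emph{some} component can, a transition of $\metric(\procP,\procQ)$ is witnessed by one concrete map $\cbmap[0]$, so one never has to reconcile the bisimulation clauses of \emph{all} maps simultaneously: a single witness suffices, and the lower-bound property then relays the result back to $\metric$. The accompanying subtlety worth stating explicitly is that the target $\stateSi$ of the meet's transition is literally the target of $\cbmap[0](\procP,\procQ)$'s transition, so no state-combining occurs and transitivity applies verbatim.
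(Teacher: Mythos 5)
Your proof is correct and follows essentially the same route as the paper's: condition 1 via the valuation axiom for $\glb[\automatonV]$, condition 2 by extracting a single witnessing map from the transition axiom for $\glb[\automatonV]$ and relaying the bound back to $\metric$ through minimality and transitivity, and minimality itself from the lower-bound property of $\glb[\automatonV]$. Your additions are purely presentational (establishing minimality first, noting nonemptiness via the constant $\mtop[\automatonV]$ map, and spelling out the steps the paper compresses into ``from which follows the thesis'').
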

\longversion{
\begin{proof}
We start by showing that $\metric$ is a contextual bisimulation map. For condition 
$\immMetric(\procP,\procQ) \lessThan[\quantaleQ] \val \metric(\procP,\procQ)$,
notice that $\val \metric(\procP,\procQ) = 
\val \glb[\automatonV] \setcomp{\cbmap(\procP,\procQ)}{\cbmap \text{ is a contextual bisimulation 
map}} = \glb[\quantaleQ] \setcomp{\val \cbmap(\procP,\procQ)}{\cbmap \text{ is a 
contextual bisimulation map}}$. Since $\immMetric(\procP,\procQ) 
\lessThan[\quantaleQ] \val \cbmap(\procP,\procQ)$ for all $\cbmap$, we have the thesis.
For the remaining condition,
suppose $\metric(\procP,\procQ) \trans[\labelL] \stateSi$. By definition of metric and
of $\glb[\automatonV]$, we have that $\cbmap(\procP,\procQ) \trans[\labelL] \stateSi$ for some
contextual bisimulation map $\cbmap$, from which follows the thesis. 

Minimality follows directly from the definition of $\metric$ and of $\glb[\automatonV]$.
\end{proof}
}
We still do not know whether $\metric$ is a \emph{metric}. We need a  
handy characterization of $\metric$ for that.

\paragraph*{A Useful Characterization of CBMs.}
Larsen's environment parametrized bisimulations~\cite{Larsen87} is a variation 
on ordinary bisimulation in which the compared states are tested against 
environments of a specific kind, this way giving rise to a ternary relation. We 
here show that CBMs can be captured along the same lines. A formal comparison 
between CBMs and Larsen's approach is 
deferred to \cref{sec:ex-param-bisim}.
\begin{definition}[Parametrized 
Bisimulation]\label{def:parameterised-bisimulation}
Let $(\proc,\labels,\trans,\immMetric)$ and 
$(\states,\labels,\trans,\val)$ be, respectively, a 
$\quantaleQ$-LTS and a $\quantaleQ$-MLTS. An $\states$-indexed family of 
relations $\{\relR_{\stateS}\}$ such that $\relR_{\stateS} \subseteq 
\proc \times \proc$ is said to be a \emph{parametrized bisimulation} iff, 
whenever 
$\procP \;\relR_{\stateS}\; \procQ$, it holds that $\immMetric(\procP,\procQ)
\lessThan[\quantaleQ] \val \stateS$, and $\stateS \trans[\labelL] \stateSi$ implies:
\begin{itemize}
\item\label{def:param-bisim-1}
$\procP \trans[\labelL] \procPi \implies \exists \procQi: \procQ \trans[\labelL] \procQi 
\text{ and } \procPi \relR_{\stateSi} \procQi$;
\item
$\procQ \trans[\labelL] \procQi \implies \exists \procPi: \procP \trans[\labelL] \procPi 
\text{ and } \procPi \relR_{\stateSi} \procQi$.\label{def:param-bisim-2}
\end{itemize}
Parametrized bisimilarity is the largest parametrized bisimulation, namely 
the largest family $\{\sim_{\stateS}\}$ such that $\procP \sim_{\stateS} \procQ$ if
$\procP \relR_{\stateS} \procQ$ for some parametrized bisimulation 
$\{\relR_{\stateS}\}$. 
\end{definition}

The fact that $\{\sim_{\stateS}\}$ is indeed a parametrized bisimulation holds 
because parametrized
bisimulations are closed under unions (defined point-wise), something which can 
be proved with a simple
generalisation of standard techniques \cite{Milner89,Park81}. \shortversion{Parametrized bisimilarity turns out to be strongly related to $\metric$, this 
way providing a simple proof technique that will be heavily used in the rest of 
the paper.}
\longversion{
The following lemma provides a monotonicity property for parametrized 
bisimilarity, which will be very useful in the following:
\begin{lemma}
If $\stateS \lessThan[\automatonV] \stateSi$ and $\procP \sim_{\stateS} 
\procQ$, then 
$\procP \sim_{\stateSi} \procQ$.
\end{lemma}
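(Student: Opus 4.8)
The plan is to establish this monotonicity property by exhibiting a single parametrized bisimulation that witnesses the claim, and then invoking the maximality of $\{\sim_{\stateS}\}$. Concretely, I would define an $\states$-indexed family $\{\relR_{\stateS}\}$ by
\[
\relR_{\stateS} = \setcomp{(\procP,\procQ)}{\exists \stateSii \lessThan[\automatonV] \stateS:\; \procP \sim_{\stateSii} \procQ},
\]
and prove that it is a parametrized bisimulation in the sense of \cref{def:parameterised-bisimulation}. Once this is done the lemma follows immediately: if $\stateS \lessThan[\automatonV] \stateSi$ and $\procP \sim_{\stateS} \procQ$, then taking $\stateSii = \stateS$ shows $(\procP,\procQ) \in \relR_{\stateSi}$, and since $\{\sim_{\stateS}\}$ is the largest parametrized bisimulation we get $\relR_{\stateSi} \subseteq\; \sim_{\stateSi}$, hence $\procP \sim_{\stateSi} \procQ$.

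To verify that $\{\relR_{\stateS}\}$ is a parametrized bisimulation, I would fix $(\procP,\procQ) \in \relR_{\stateS}$ with witness $\stateSii \lessThan[\automatonV] \stateS$ such that $\procP \sim_{\stateSii} \procQ$. The immediate-metric requirement $\immMetric(\procP,\procQ) \lessThan[\quantaleQ] \val \stateS$ is obtained by chaining two facts: first $\immMetric(\procP,\procQ) \lessThan[\quantaleQ] \val \stateSii$, because $\procP \sim_{\stateSii} \procQ$ and $\{\sim_{\stateS}\}$ is itself a parametrized bisimulation; and second $\val \stateSii \lessThan[\quantaleQ] \val \stateS$, which is exactly clause~\ref{def:lessThan:item1} of the definition of a $\lessThan[\quantaleQ]$-preserving simulation applied to $\stateSii \lessThan[\automatonV] \stateS$. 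Transitivity of the quantale order $\lessThan[\quantaleQ]$ then yields the claim.

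The crux is the transfer of transitions, and here I would exploit that $\lessThan[\automatonV]$ is a \emph{reverse} simulation. Suppose $\stateS \trans[\labelL] \stateSi$. Applying clause~\ref{def:lessThan:item2} to $\stateSii \lessThan[\automatonV] \stateS$ and this transition produces a state $\stateSiii$ with $\stateSii \trans[\labelL] \stateSiii$ and $\stateSiii \lessThan[\automatonV] \stateSi$; that is, a move of the \emph{larger} parameter $\stateS$ is matched by a move of the \emph{smaller} witness $\stateSii$, which is precisely the direction needed to keep the bisimulation game running. Since $\procP \sim_{\stateSii} \procQ$ and $\stateSii \trans[\labelL] \stateSiii$, whenever $\procP \trans[\labelL] \procPi$ the family $\{\sim_{\stateS}\}$ supplies $\procQi$ with $\procQ \trans[\labelL] \procQi$ and $\procPi \sim_{\stateSiii} \procQi$; as $\stateSiii \lessThan[\automatonV] \stateSi$, this gives $(\procPi,\procQi) \in \relR_{\stateSi}$, as required, and the symmetric case for a move of $\procQ$ is identical.

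The main obstacle I anticipate is simply getting this direction right. Because $\lessThan[\automatonV]$ simulates transitions of the greater state by the smaller one, one must index the family $\{\relR_{\stateS}\}$ so that the witness $\stateSii$ sits \emph{below} the index $\stateS$; only then does the reverse-simulation clause fire in the useful direction. With that alignment in place, every remaining step is a routine appeal to the defining properties of $\{\sim_{\stateS}\}$ together with the transitivity of $\lessThan[\quantaleQ]$.
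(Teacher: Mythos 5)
Your proof is correct and follows essentially the same route as the paper: the paper's own argument defines exactly the same family $\relR_{\stateS} = \setcomp{(\procP,\procQ)}{\exists \stateSii \lessThan[\automatonV] \stateS: \procP \sim_{\stateSii} \procQ}$ and observes that it is a parametrized bisimulation by the definition of $\lessThan[\automatonV]$. You have merely spelled out the details the paper leaves implicit, including the correct use of the reverse-simulation direction.
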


\begin{proof}
It suffice to prove that $\relR[\stateS] = \setcomp{(\procP,\procQ)}{\exists \stateSi 
\lessThan[\automatonV] \stateS: \procP \sim_{\stateSi} \procQ}$ is a 
parametrized bisimulation,
which follows easily by the definition of $\lessThan[\automatonV]$. 
\end{proof}

Parametrized bisimilarity turns out to be strongly related to $\metric$, this 
way providing a simple proof technique that will be heavily used in the rest of 
the paper.}
\begin{proposition}\label{lem:metric-bisim}
For all $\procP,\procQ,\stateS$, it holds that
$\metric(\procP,\procQ) \lessThan[\automatonV] \stateS \iff \procP 
\sim_{\stateS} \procQ$.
\end{proposition}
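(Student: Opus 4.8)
The plan is to prove the two implications separately. For the forward direction I will exploit that $\metric$ is itself a contextual bisimulation map, while for the converse I will exploit the largest-fixpoint nature of parametrized bisimilarity together with the fact, established in the previous lemma, that $\metric$ is the \emph{least} contextual bisimulation map (so that it suffices to exhibit \emph{some} contextual bisimulation map bounded by $\stateS$).

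For the forward direction, I would fix the $\states$-indexed family $\relR_{\stateS} = \setcomp{(\procP,\procQ)}{\metric(\procP,\procQ) \lessThan[\automatonV] \stateS}$ and show it is a parametrized bisimulation; then $\metric(\procP,\procQ) \lessThan[\automatonV] \stateS$ gives $\procP \relR_{\stateS} \procQ$, whence $\procP \sim_{\stateS} \procQ$ since $\sim$ is the largest such family. To verify the clauses, assume $\procP \relR_{\stateS} \procQ$. The immediate-distance clause follows by chaining $\immMetric(\procP,\procQ) \lessThan[\quantaleQ] \val\metric(\procP,\procQ) \lessThan[\quantaleQ] \val\stateS$, where the first inequality is clause~\ref{def:cbm:item1} of $\metric$ being a contextual bisimulation map and the second is clause~\ref{def:lessThan:item1} of the $\lessThan[\quantaleQ]$-preserving simulation $\lessThan[\automatonV]$. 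For the transfer clause, given $\stateS \trans[\labelL] \stateSi$, clause~\ref{def:lessThan:item2} of the simulation $\lessThan[\automatonV]$ yields $\stateSii$ with $\metric(\procP,\procQ) \trans[\labelL] \stateSii$ and $\stateSii \lessThan[\automatonV] \stateSi$; then clause~\ref{def:cbm:item2} of $\metric$ matches each process move, producing $\procPi,\procQi$ with $\metric(\procPi,\procQi) \lessThan[\automatonV] \stateSii$, and transitivity of $\lessThan[\automatonV]$ (Proposition~\ref{lem:mlts-properties}) gives $\metric(\procPi,\procQi) \lessThan[\automatonV] \stateSi$, i.e. $\procPi \relR_{\stateSi} \procQi$.

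For the converse, the idea is to turn parametrized bisimilarity into a concrete contextual bisimulation map and then invoke minimality of $\metric$. I would define $\cbmap(\procP,\procQ) = \glb[\automatonV] \setcomp{\stateS}{\procP \sim_{\stateS} \procQ}$. Since $\glb[\automatonV]$ is a lower bound (Proposition~\ref{lem:mlts-properties}), $\procP \sim_{\stateS} \procQ$ immediately gives $\cbmap(\procP,\procQ) \lessThan[\automatonV] \stateS$, so it suffices to prove that $\cbmap$ is a contextual bisimulation map: the least-map property then yields $\metric(\procP,\procQ) \lessThan[\automatonV] \cbmap(\procP,\procQ) \lessThan[\automatonV] \stateS$. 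Clause~\ref{def:cbm:item1} uses the MLTS valuation axiom $\val\glb[\automatonV]\statesi = \glb[\quantaleQ]\setcomp{\val\stateS}{\stateS \in \statesi}$ together with the fact that $\immMetric(\procP,\procQ)$ is a lower bound of $\setcomp{\val\stateS}{\procP \sim_{\stateS} \procQ}$, each element of which satisfies the immediate-distance clause of parametrized bisimilarity; the empty-set case is covered since the axioms give $\val\glb[\automatonV]\emptyset = \glb[\quantaleQ]\emptyset = \mtop[\quantaleQ]$. For clause~\ref{def:cbm:item2}, the MLTS transition axiom for $\glb[\automatonV]$ states that $\cbmap(\procP,\procQ) \trans[\labelL] \stateSi$ holds precisely when some $\stateS$ with $\procP \sim_{\stateS} \procQ$ has $\stateS \trans[\labelL] \stateSi$; feeding this into the transfer clause of $\{\sim_{\stateS}\}$ (which is a parametrized bisimulation) produces matching moves with $\procPi \sim_{\stateSi} \procQi$, and by definition of $\cbmap$ this gives $\cbmap(\procPi,\procQi) \lessThan[\automatonV] \stateSi$.

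The hard part is the converse direction: it is the one that genuinely requires building a new contextual bisimulation map out of $\sim$ and leaning on the MLTS axioms for $\glb[\automatonV]$ — both the valuation equation and, crucially, the characterization of which transitions emanate from a meet. The forward direction, by contrast, reduces to a routine transfer argument once one observes that $\lessThan[\automatonV]$ is a $\lessThan[\quantaleQ]$-preserving simulation and that $\metric$ is a contextual bisimulation map.
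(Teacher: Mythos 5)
Your proof is correct and follows essentially the same route as the paper's: the forward direction via the family $\relR_{\stateS} = \setcomp{(\procP,\procQ)}{\metric(\procP,\procQ) \lessThan[\automatonV] \stateS}$ shown to be a parametrized bisimulation, and the converse via the map $\cbmap(\procP,\procQ) = \glb[\automatonV]\setcomp{\stateS}{\procP \sim_{\stateS} \procQ}$ shown to be a contextual bisimulation map and then bounded using minimality of $\metric$. Your write-up is in fact slightly more explicit than the paper's (e.g.\ in spelling out the use of the $\lessThan[\quantaleQ]$-preserving simulation clauses and the empty-meet case), but the decomposition and the key constructions are identical.
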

\longversion{
\begin{proof}
For the $\Rightarrow$ direction, define the $\states$-indexed family of relations
$\relR_{\stateS} \subseteq \proc \times \proc$ as follows:
\[
\procP\;\relR_{\stateS}\; \procQ \iff \metric(\procP,\procQ) \lessThan[\automatonV] \stateS
\]
It suffice to show that $\relR$ is a parametrized bisimulation. So, let
$\procP \relR_{\stateS} \procQ$. Condition $\immMetric(\procP,\procQ) 
\lessThan[\quantaleQ] \val \stateS$ follows from the fact that 
$\metric(\procP,\procQ) \lessThan[\automatonV] \stateS$.
Now, suppose that $\stateS \trans[\labelL] \stateSi$
and $\procP \trans[\labelL] \procPi$. Since $\stateS \trans[\labelL] \stateSi$,
we have that $\metric(\procP,\procQ) \trans[\labelL] \stateSi[\metric]$.
Then $\procQ \trans[\labelL] \procQi$ for some $\procQi$ such that
$\metric(\procPi,\procQi) \lessThan[\automatonV] \stateSi[\metric]$, that is 
$\procPi \relR_{\stateSi[\metric]} \procQi$. The case for a $\procQ$ move is similar.

For the $\Leftarrow$ direction, it suffice to show that $\cbmap$, defined below,
is a contextual bisimulation map.
\[
\cbmap(\procP,\procQ) = \glb[\automatonV] \setcomp{\stateS}{\procP \sim_{\stateS} \procQ}
\]
Condition $\immMetric(\procP,\procQ) \lessThan[\quantaleQ]
\val \cbmap(\procP,\procQ)$ follows from the fact that $\immMetric(\procP,\procQ)$
is less than  $\val \stateS$ for all $\stateS$ such that $\procP\; \sim_{\stateS}\; \procQ$.
So, suppose $\cbmap(\procP,\procQ) \trans[\labelL] \stateSi$ and 
$\procP \trans[\labelL] \procPi$. Then $\stateS \trans[\labelL] \stateSi$ for some
$\stateS$ such that $\procP\; \sim_{\stateS}\; \procQ$. Hence $\procQ \trans[\labelL]
\procQi$ for some $\procQi$ such that $\procPi\; \sim_{\stateSi}\; \procQi$. Therefore
$\cbmap(\procPi,\procQi) \lessThan[\automatonV] \stateSi$. The case for a $\procQ$ move is similar.
\end{proof}
}
We are finally ready to state that $\metric$ satisfies the axioms of a metric.
\begin{theorem}
The contextual bisimulation map $\metric$ is a metric.
\end{theorem}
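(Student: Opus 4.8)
The plan is to verify the three axioms in the definition of \emph{Metrics}, reading each equality up to $\mequiv[\automatonV]$ since $\automatonV$ is assumed to be a quantale (i.e.\ $\states$ has effectively been quotiented by mutual simulation). The single most useful tool will be the characterization of \cref{lem:metric-bisim}, namely $\metric(\procP,\procQ) \lessThan[\automatonV] \stateS \iff \procP \sim_{\stateS} \procQ$: it lets me translate every statement about $\metric$ into the construction of a parametrized bisimulation, which is far easier to manipulate than $\metric$ directly. I would first record two auxiliary facts. Parametrized bisimulations are closed under pointwise inverse --- if $\{\relR_{\stateS}\}$ is one, so is $\{\relR_{\stateS}^{-1}\}$ --- because $\immMetric$ is symmetric and the two clauses of \cref{def:parameterised-bisimulation} are mirror images; consequently $\procP \sim_{\stateS} \procQ \iff \procQ \sim_{\stateS} \procP$. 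And $\mplus[\quantaleQ]$ is monotone, which is immediate from the quantale distributivity $\elE \mplus \glb A = \glb\setcomp{\elE \mplus \elF}{\elF \in A}$.

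For reflexivity I want $\metric(\procP,\procP) \mequiv[\automatonV] \mbot[\automatonV]$. Since $\mbot[\automatonV] \lessThan[\automatonV] \metric(\procP,\procP)$ always holds by \cref{lem:mlts-properties}, it suffices to prove $\metric(\procP,\procP) \lessThan[\automatonV] \mbot[\automatonV]$, i.e.\ by \cref{lem:metric-bisim} to prove $\procP \sim_{\mbot[\automatonV]} \procP$. The witnessing family takes $\relR_{\mbot[\automatonV]}$ to be the identity on $\proc$ and $\relR_{\stateS} = \emptyset$ otherwise: the immediate-metric clause holds because $\immMetric(\procP,\procP) = \mbot[\quantaleQ] = \val\mbot[\automatonV]$, and the transition clause is trivially satisfied since the only move of $\mbot[\automatonV]$ is $\mbot[\automatonV] \trans[\labelL] \mbot[\automatonV]$, matched by replaying the same move of $\procP$ against itself, landing again in the identity at index $\mbot[\automatonV]$. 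Symmetry is then almost free: by the inverse-closure fact, \cref{lem:metric-bisim} gives $\metric(\procP,\procQ) \lessThan[\automatonV] \stateS \iff \metric(\procQ,\procP) \lessThan[\automatonV] \stateS$ for every $\stateS$; instantiating $\stateS$ at $\metric(\procP,\procQ)$ and at $\metric(\procQ,\procP)$ and using reflexivity of $\lessThan[\automatonV]$ yields both inequalities, hence $\metric(\procP,\procQ) \mequiv[\automatonV] \metric(\procQ,\procP)$.

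The triangle inequality is where the real work lies. I want $\metric(\procP,\procR) \lessThan[\automatonV] \metric(\procP,\procQ) \mplus[\automatonV] \metric(\procQ,\procR)$, which by \cref{lem:metric-bisim} reduces to $\procP \sim_{\metric(\procP,\procQ) \mplus[\automatonV] \metric(\procQ,\procR)} \procR$. I would construct the composite family
\[
\relR_{\stateS} = \setcomp{(\procP,\procR)}{\exists \procQ,\stateS[1],\stateS[2]:\; \stateS = \stateS[1] \mplus[\automatonV] \stateS[2],\; \procP \sim_{\stateS[1]} \procQ,\; \procQ \sim_{\stateS[2]} \procR}
\]
and prove it is a parametrized bisimulation; the claim then follows since $\procP \sim_{\metric(\procP,\procQ)} \procQ$ and $\procQ \sim_{\metric(\procQ,\procR)} \procR$ place $(\procP,\procR)$ in $\relR_{\metric(\procP,\procQ) \mplus[\automatonV] \metric(\procQ,\procR)}$. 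The immediate-metric clause combines $\immMetric(\procP,\procQ) \lessThan[\quantaleQ] \val\stateS[1]$ and $\immMetric(\procQ,\procR) \lessThan[\quantaleQ] \val\stateS[2]$ with the triangle inequality of $\immMetric$, monotonicity of $\mplus[\quantaleQ]$, and the homomorphism law $\val(\stateS[1] \mplus[\automatonV] \stateS[2]) = \val\stateS[1] \mplus[\quantaleQ] \val\stateS[2]$ from \cref{def:MLTS}. The transition clause is the delicate point: given $\stateS[1] \mplus[\automatonV] \stateS[2] \trans[\labelL] \stateSi$, the $\mplus[\automatonV]$-axiom of \cref{def:MLTS} forces $\stateSi = \stateSi[1] \mplus[\automatonV] \stateSi[2]$ with $\stateS[1] \trans[\labelL] \stateSi[1]$ and $\stateS[2] \trans[\labelL] \stateSi[2]$; a move $\procP \trans[\labelL] \procPi$ is then matched by first playing the $\stateS[1]$-game to obtain $\procQ \trans[\labelL] \procQi$ with $\procPi \sim_{\stateSi[1]} \procQi$, and then playing the $\stateS[2]$-game on that \emph{same} transition $\procQ \trans[\labelL] \procQi$ to obtain $\procR \trans[\labelL] \procRi$ with $\procQi \sim_{\stateSi[2]} \procRi$, so that $(\procPi,\procRi) \in \relR_{\stateSi[1] \mplus[\automatonV] \stateSi[2]} = \relR_{\stateSi}$; the $\procR$-side move is symmetric. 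I expect the main obstacle to be exactly this synchronisation --- ensuring the intermediate process $\procQ$ takes one and the same labelled transition in both halves of the chained game, which is precisely what the shape of the $\mplus[\automatonV]$ transition axiom is designed to guarantee.
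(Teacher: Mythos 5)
Your proof is correct and follows essentially the same route as the paper's: the paper likewise details only the triangle inequality, reducing it via \cref{lem:metric-bisim} to showing that a sum-indexed family of relations is a parametrized bisimulation and chaining the two games through one and the same labelled transition of the intermediate process, exactly as you do (the reflexivity and symmetry arguments you spell out are omitted there but match the intended ones). Your variant of the relation --- indexed by arbitrary decompositions $\stateS = \stateS[1] \mplus[\automatonV] \stateS[2]$ with $\procP \sim_{\stateS[1]} \procQ$ and $\procQ \sim_{\stateS[2]} \procR$, rather than by $\metric(\procP,\procQ) \mplus[\automatonV] \metric(\procQ,\procR)$ exactly --- is in fact slightly cleaner, since it is closed under the game step directly, whereas the paper's version implicitly needs the monotonicity of $\sim_{\stateS}$ in $\stateS$ to re-enter the relation after a move.
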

\longversion{
\begin{proof}
We only show triangle inequality, that is $\metric(\procP[1],\procP[3]) \lessThan[\automatonV]
\metric(\procP[1],\procP[2]) \mplus[\automatonV] \metric(\procP[2],\procP[3])$ for all 
$\procP[1],\procP[2],\procP[3]$. So, define $\relR_{\stateS}$ as follows:
\[
\procP[1] \relR_{\stateS} \procP[3] \iff \exists \procP[2]: 
\stateS = \metric(\procP[1],\procP[2]) \mplus[\automatonV] \metric(\procP[2],\procP[3])
\]
By \cref{lem:metric-bisim} it suffice to prove that 
$\relR_{\stateS}$ is a parametrized bisimulation. So, suppose $\procP[1]
\relR_{\stateS} \procP[3]$. Condition $\immMetric(\procP[1],\procP[3]) 
\lessThan[\quantaleQ] \val (\metric(\procP[1],\procP[2]) \mplus \metric(\procP[2],
\procP[3]))$ follows from the fact that $\immMetric$ is a metric and hence satisfies
triangle inequality. For the remaining condition, suppose $\stateS \trans[\labelL] \stateSi$
and $\procP \trans[\labelL] \procPi$. Then $\stateSi = \stateSi[1] \mplus[\automatonV] \stateSi[2]$
for some $\stateSi[1],\stateSi[2]$ such that $\stateS[1] \trans[\labelL] \stateSi[1]$ and
$\stateS[2] \trans[\labelL] \stateSi[2]$. Thus $\procP[2] \trans[\labelL] \procPi[2]$
for some $\procPi[2]$ such that $\metric(\procPi[1],\procPi[2]) \lessThan[\automatonV] \stateSi[1]$.
This in turn implies $\procP[3] \trans[\labelL] \procPi[3]$
for some $\procPi[3]$ such that $\metric(\procPi[2],\procPi[3]) \lessThan[\automatonV] \stateSi[2]$.
Therefore $\procPi[1] \relR[\stateSi] \procPi[3]$, as required. The case for $\procP[3]$
moves is similar.
\end{proof}
}

\section{Some Relevant Examples}\label{sect:Examples}
This section is devoted to showing how well-known and heterogeneous notions of 
equivalence and distance can be recovered as CBMs for appropriate quantales and 
MLTSs.

\subsection{Strong Bisimilarity as a CBM}\label{ex:bisimilarity}

We start recalling that strong bisimilarity \cite{Milner89,Park81} is the 
largest strong bisimulation relation,
that is a relation $\relR \subseteq \proc \times \proc$ on the states of a
plain LTS $(\proc,\labels,\trans)$ such that 
$\procP\;\relR\;\procQ$ implies:
\begin{itemize}
\item
$\procP \trans[\labelL] \procPi \implies \exists \procQi: \procQ \trans[\labelL] \procQi\;
\text{and}\;\procPi\;\relR\;\procQi$;
\item
$\procQ \trans[\labelL] \procQi \implies \exists \procPi: \procP \trans[\labelL] \procPi\;
\text{and}\;\procPi\;\relR\;\procQi$.
\end{itemize}
The first thing we have to do to turn strong bisimilarity into a CBM is to 
define, given such an LTS $(\proc,\labels,\trans)$, a canonical immediate 
distance $\immMetric$ on the boolean quantale $\quantaleB$, which we call the 
\emph{canonical} distance:
\[\immMetric(\procP,\procQ) = 
\begin{cases}
\mbot \text{ if } \forall \labelL: \procP \trans[\labelL] \iff  \procQ \trans[\labelL] \\
\mtop \text{ otherwise}
\end{cases}\]
That is, the immediate distance is $\mbot$ precisely when the processes expose 
the same labels. Notice that immediate distance is not affected
by possible future behavioural differences. Any LTS like this is said to be a 
\emph{boolean} LTS. The boolean quantale can be turned very naturally into a 
MLTS: let $\automatonV$  be 
$(\{\mbot[\automatonV],\mtop[\automatonV]\},\labels,\trans,\val)$ where
the transitions are self loops $\mbot[\automatonV] \trans[\labelL] \mbot[\automatonV]$ for every $\labelL \in \labels$,
and $\val$ just associates $\mbot[\quantaleQ]$ to $\mbot[\automatonV]$ and 
$\mtop[\quantaleQ]$ to $\mtop[\automatonV]$. 
\begin{proposition}
Given any boolean LTS, $\metric$ is the characteristic function of
bisimilarity, i.e. 
$
\metric(\procP,\procQ) = \mbot[\automatonV] \iff \procP \sim \procQ 
$.
\end{proposition}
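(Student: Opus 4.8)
The plan is to ride entirely on \cref{lem:metric-bisim}, reducing the statement to the coincidence of parametrized bisimilarity at the index $\mbot[\automatonV]$ with ordinary strong bisimilarity. Since the carrier of $\automatonV$ is $\{\mbot[\automatonV],\mtop[\automatonV]\}$, the value $\metric(\procP,\procQ)$ is one of these two states, so the first step is to pin down the order $\lessThan[\automatonV]$ on this two-element MLTS. Item~2 of \cref{lem:mlts-properties} gives $\mbot[\automatonV] \lessThan[\automatonV] \mtop[\automatonV]$, whereas $\mtop[\automatonV] \lessThan[\automatonV] \mbot[\automatonV]$ must fail: any $\lessThan[\quantaleQ]$-preserving simulation relating $\mtop[\automatonV]$ to $\mbot[\automatonV]$ would force $\mtop[\quantaleQ] = \val\mtop[\automatonV] \lessThan[\quantaleQ] \val\mbot[\automatonV] = \mbot[\quantaleQ]$, which is impossible in $\quantaleB$. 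Hence the only state below $\mbot[\automatonV]$ is $\mbot[\automatonV]$ itself, so $\metric(\procP,\procQ) = \mbot[\automatonV]$ is equivalent to $\metric(\procP,\procQ) \lessThan[\automatonV] \mbot[\automatonV]$, which by \cref{lem:metric-bisim} is in turn equivalent to $\procP \sim_{\mbot[\automatonV]} \procQ$.

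It therefore remains to show $\procP \sim_{\mbot[\automatonV]} \procQ \iff \procP \sim \procQ$. The crucial observation is that $\mbot[\automatonV]$ carries a self-loop $\mbot[\automatonV] \trans[\labelL] \mbot[\automatonV]$ for every $\labelL$, so unfolding \cref{def:parameterised-bisimulation} at the index $\mbot[\automatonV]$ yields exactly the matching clauses of an ordinary strong bisimulation, with the successor index again $\mbot[\automatonV]$; moreover the side condition $\immMetric(\procP,\procQ) \lessThan[\quantaleQ] \val\mbot[\automatonV] = \mbot[\quantaleQ]$ just says that $\procP$ and $\procQ$ expose the same labels.

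I would then prove the two inclusions separately. For $\sim\,\subseteq\,\sim_{\mbot[\automatonV]}$, given a strong bisimulation $\relR$, I would take the family $\relR_{\mbot[\automatonV]} = \relR$ and $\relR_{\mtop[\automatonV]} = \proc \times \proc$ and check it is a parametrized bisimulation: at index $\mtop[\automatonV]$ the transition clauses hold vacuously since $\mtop[\automatonV]$ has no outgoing transition, and $\immMetric \lessThan[\quantaleQ] \mtop[\quantaleQ]$ always holds; at index $\mbot[\automatonV]$ the matching clauses are precisely those of $\relR$, and the side condition $\immMetric(\procP,\procQ) = \mbot[\quantaleQ]$ holds because a strong bisimulation relates only states with the same set of available labels. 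Conversely, for $\sim_{\mbot[\automatonV]}\,\subseteq\,\sim$, the self-loop structure shows directly that $\sim_{\mbot[\automatonV]}$ is itself a strong bisimulation, hence contained in $\sim$. Combining the two gives $\sim_{\mbot[\automatonV]}\,=\,\sim$, which chains with the equivalences of the first paragraph to yield the claim.

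The argument is essentially bookkeeping, and I expect no deep obstacle. The only mildly delicate points are the verification that $\mtop[\automatonV] \not\lessThan[\automatonV] \mbot[\automatonV]$ (so that $\lessThan[\automatonV] \mbot[\automatonV]$ really isolates $\mbot[\automatonV]$) and the observation that the immediate-distance side condition is automatically met by any strong bisimulation; the heart of the proof, the coincidence $\sim_{\mbot[\automatonV]} = \sim$, is immediate once the self-loops at $\mbot[\automatonV]$ are noticed.
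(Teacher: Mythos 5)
Your proof is correct, but it follows a genuinely different route from the paper's. The paper argues directly from the definitions: for the forward direction it exhibits the characteristic function of $\sim$ (sending bisimilar pairs to $\mbot[\automatonV]$ and all others to $\mtop[\automatonV]$), verifies by hand that it is a contextual bisimulation map, and invokes minimality of $\metric$; for the converse it shows that $\setcomp{(\procP,\procQ)}{\metric(\procP,\procQ)=\mbot[\automatonV]}$ is a strong bisimulation by unfolding clause~\ref{def:cbm:item2} of \cref{def:cbm} at the self-loop $\mbot[\automatonV]\trans[\labelL]\mbot[\automatonV]$. You instead factor everything through \cref{lem:metric-bisim}, reducing the claim to the identity $\sim_{\mbot[\automatonV]}\,=\,\sim$, which you establish by translating bisimulations into parametrized bisimulations (padding the $\mtop[\automatonV]$ index with a vacuous relation) and back. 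Both arguments are sound; yours is more modular and reuses the characterization the paper itself advertises as its main proof technique, at the cost of the small preliminary step of checking that $\mtop[\automatonV]\not\lessThan[\automatonV]\mbot[\automatonV]$ so that $\lessThan[\automatonV]\mbot[\automatonV]$ collapses to equality with $\mbot[\automatonV]$ --- a point the paper's direct proof glosses over when passing from minimality to the literal identity $\metric(\procP,\procQ)=\mbot[\automatonV]$, and which your version makes explicit. Your observation that any strong bisimulation automatically satisfies the immediate-distance side condition for the canonical $\immMetric$ is also needed in the paper's proof (implicitly, when checking that $\cbmap$ is a contextual bisimulation map), so the two proofs ultimately rest on the same two facts about the boolean MLTS.
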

\longversion{
\begin{proof}
To see why, define: 
\[\cbmap(\procP,\procQ)= 
\begin{cases}
\mbot[\automatonV] \text{ if } \procP \sim \procQ\\
\mtop[\automatonV] \text{ otherwise}
\end{cases}
\]
Notice that $\cbmap$ is a contextual bisimulation map. Indeed, condition 
$\immMetric(\procP,\procQ) \lessThan[\quantaleQ] \val \cbmap(\procP,\procQ)$
surely holds: if $\immMetric(\procP,\procQ) = \mtop[\quantaleQ]$ it must be
$\procP \not\sim \procQ$ and hence $\val \cbmap(\procP,\procQ) = \mtop[\quantaleQ]$.
For the other condition, suppose $\cbmap(\procP,\procQ) \trans[\labelL] \stateSi$. Then 
$\cbmap(\procP,\procQ) = \stateSi = \mbot[\automatonV]$. Therefore $\procP \sim \procQ$,
and hence if $\procP \trans[\labelL] \procPi$ there is a matching $\labelL$-transition
of $\procQ$, and viceversa.
From the above, together with the fact that $\metric$ is minimal, it follows that: 
\[
\procP \sim \procQ \implies \metric(\procP,\procQ) = \mbot[\automatonV]
\]
For the reverse implication, define $\relR \subseteq \proc \times \proc$ as follows:
\[
\procP\;\relR\;\procQ \iff \metric(\procP,\procQ) = \mbot[\automatonV]
\]
It suffice to show that $\relR$ is a bisimulation relation.
So, suppose $\procP\;\relR\;\procQ$ and $\procP \trans[\labelL] \procPi$. Since
$\metric(\procP,\procQ) = \mbot[\automatonV]$, we have that 
$\metric(\procP,\procQ) \trans[\labelL] \mbot[\automatonV]$ and hence 
$\procQ \trans[\labelL] \procQi$ for some $\procQi$ such that
$\metric(\procPi,\procQi) = \mbot[\automatonV]$. 
Therefore $\procPi \;\relR\;\procQi$. 
The case for a $\procQ$ move is similar.
\end{proof}
} 

\subsection{Behavioural CBMs}\label{ex:beh-bisimilarity}
Most behavioural metrics from the literature are defined on
\emph{probabilistic} 
transition systems~\cite{DesharnaisGJP04,DesharnaisLT08,BreugelW06}, 
differently from CBMs. Some probabilistic behavioural metrics can still be 
captured in our framework by using as states of the process LTS
(sub)distributions of states of the original PLTS, e.g. the distribution 
based metric in \cite{DuDG22}.
Non-probabilistic behavioural metrics exist, e.g., the 
so-called ``branching metrics''~\cite{AlfaroFS04}, which 
are indeed instances of behavioural metrics as defined below. Notice that our 
definition has a generic quantale 
$\quantaleQ$ as its codomain, while usually behavioural metrics take values in 
the interval $\mathbb{R}_{[0,1]}$.

Let us first recall what we mean by a behavioural metric here. A metric $M: 
\proc \times \proc 
\to \quantaleQ$ is said to be a \emph{behavioural metric} if, for all pairs of 
states $\procP,\procQ$, it holds that 
$\immMetric(\procP,\procQ) \lessThan[\quantaleQ] M(\procP,\procQ)$ and, whenever 
$M(\procP,\procQ) \lessThanStrict[\quantaleQ] \mtop[\quantaleQ]$, we have that:
\begin{itemize}
\item
$\procP \trans[\labelL] \procPi \implies \exists \procQi: \procQ \trans[\labelL] \procQi\;
\text{and}\;M(\procP,\procQ) \greatThan[\quantaleQ] M(\procPi,\procQi)$;
\item
$\procQ \trans[\labelL] \procQi \implies \exists \procPi: \procP \trans[\labelL] \procPi\;
\text{and}\;M(\procP,\procQ) \greatThan[\quantaleQ] M(\procPi,\procQi)$.
\end{itemize}

Intuitively, behavioural metrics can be seen as quantitative variations on the 
theme of a bisimulation: they
associate a value from a quantale to each pair of processes (rather than a 
boolean), they are coinductive in nature. Moreover, they are based on the 
bisimulation game, i.e., any move of one of the two processes needs to be 
matched by some move of the other, at least when their distance is not maximal.
\highlight{Our definition is similar to the one in \cite{DuDG22}.}
However, many behavioural metrics in literature deal with non-determinism through
the Hausdorff lifting, that is by stipulating that $\immMetric(\procP,\procQ) 
\lessThan[\quantaleQ] M(\procP,\procQ)$ and
for all $\labelL$:
\[
M(\procP,\procQ) \greatThan[\quantaleQ] \lub[{\procP \trans[\labelL] \procPi}] \glb [{\procQ \trans[\labelL] \procQi}] M(\procPi,\procQi)\qquad
\text{and}\qquad M(\procP,\procQ) \greatThan[\quantaleQ]\lub[{\procQ \trans[\labelL] \procQi}] \glb [{\procP \trans[\labelL] \procPi}] M(\procPi,\procQi)
\]
The two notions are equivalent if the process LTS is image-finite and $\quantaleQ$ is totally ordered, both conditions are often assumed to be true in the literature.
\longversion{
The following lemma states the above formally.
\begin{lemma}
Over totally ordered quantales and for image-finite process LTSs, behavioural 
metrics and Hausdorff metrics coincide.
\end{lemma}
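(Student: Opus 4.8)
The plan is to prove the two inclusions separately, observing that only the harder direction (Hausdorff $\Rightarrow$ behavioural) actually consumes the hypotheses. Throughout I would set aside the immediate-distance requirement $\immMetric(\procP,\procQ) \lessThan[\quantaleQ] M(\procP,\procQ)$, since it appears verbatim in both definitions and needs no argument; the work is entirely about the transition clauses.

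First I would show that every behavioural metric is a Hausdorff metric, and this direction holds over an arbitrary quantale with no finiteness assumption. Fix $\procP,\procQ,\labelL$. If $M(\procP,\procQ) = \mtop[\quantaleQ]$ the Hausdorff inequality is immediate, as $\mtop[\quantaleQ]$ bounds everything from above. Otherwise $M(\procP,\procQ) \lessThanStrict[\quantaleQ] \mtop[\quantaleQ]$ and the behavioural clauses apply: for each $\procPi$ with $\procP \trans[\labelL] \procPi$ there is a matching $\procQi$ with $\procQ \trans[\labelL] \procQi$ and $M(\procPi,\procQi) \lessThan[\quantaleQ] M(\procP,\procQ)$. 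Since the meet lies below this particular term, $\glb[{\procQ \trans[\labelL] \procQi}] M(\procPi,\procQi) \lessThan[\quantaleQ] M(\procP,\procQ)$ for every such $\procPi$, so $M(\procP,\procQ)$ is an upper bound of all these meets and the outer join sits below it. The symmetric inequality is identical with the roles of $\procP$ and $\procQ$ exchanged.

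The converse is where the hypotheses are needed. Assume $M$ is a Hausdorff metric with $M(\procP,\procQ) \lessThanStrict[\quantaleQ] \mtop[\quantaleQ]$, and suppose $\procP \trans[\labelL] \procPi$; I must produce a concrete $\procQi$ with $\procQ \trans[\labelL] \procQi$ and $M(\procP,\procQ) \greatThan[\quantaleQ] M(\procPi,\procQi)$. Since the term indexed by this $\procPi$ lies below the outer join, the first Hausdorff inequality yields $\glb[{\procQ \trans[\labelL] \procQi}] M(\procPi,\procQi) \lessThan[\quantaleQ] M(\procP,\procQ) \lessThanStrict[\quantaleQ] \mtop[\quantaleQ]$. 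The meet being strictly below $\mtop[\quantaleQ]$ forces $\setcomp{\procQi}{\procQ \trans[\labelL] \procQi}$ to be non-empty, as an empty meet equals $\mtop[\quantaleQ]$; this already secures a matching transition. It then remains to replace the infimum by an attained minimum, which is exactly the step consuming both hypotheses: image-finiteness makes $\setcomp{M(\procPi,\procQi)}{\procQ \trans[\labelL] \procQi}$ a finite set, and total ordering of $\quantaleQ$ guarantees that a finite non-empty meet equals its least element, hence is realised by some $\procQi^{*}$. For that $\procQi^{*}$ we obtain $M(\procPi,\procQi^{*}) = \glb[{\procQ \trans[\labelL] \procQi}] M(\procPi,\procQi) \lessThan[\quantaleQ] M(\procP,\procQ)$, as required; the $\procQ$-move clause follows symmetrically from the second Hausdorff inequality. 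The main obstacle is precisely this realisation step: over a merely partially ordered quantale, or under infinite branching, the meet may be a strict lower bound witnessed by no single successor, so the behavioural clause — which demands an actual transition attaining the bound — would fail.
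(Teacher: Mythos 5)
Your proof is correct and follows essentially the same route as the paper's: the easy direction uses only that the meet lies below any particular matching term, and the converse extracts a realising successor from the finite, totally ordered set of candidate distances. The one point where you go beyond the paper is the explicit observation that the meet being strictly below $\mtop[\quantaleQ]$ rules out an empty successor set (since the empty meet is $\mtop[\quantaleQ]$); the paper leaves this non-emptiness implicit, so this is a welcome clarification rather than a divergence.
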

\begin{proof}
We first show that behavioural metrics (BM for short) are Hausdorff metrics (HM 
for short).
So, let $M$ be a BM. The requirement $\immMetric(\procP,\procQ) \lessThan[\quantaleQ] M(\procP,\procQ)$ holds by definition of BM. 
Let's focus to the requirement $M(\procP,\procQ) \greatThan[\quantaleQ] \lub[{\procP \trans[\labelL] \procPi}] \glb [{\procQ \trans[\labelL] \procQi}] M(\procPi,\procQi)$.
First notice that it holds trivially when $M(\procP,\procQ) = \mtop$. Otherwise, it suffice to show that $M(p,q) \greatThan[\quantaleQ] 
\glb [{\procQ \trans[\labelL] \procQi}] M(\procPi,\procQi)$ whenever $\procP \trans[\labelL] \procPi$. So, suppose $\procP \trans[\labelL] \procPi$.
Then, by definition of BM, there is $\procQi$ such that $\procQ \trans[\labelL] \procQi$ and $M(\procP,\procQ) \greatThan[\quantaleQ] M(\procPi,\procQi) = d$. The thesis
follows as $d \greatThan[\quantaleQ] \glb [{\procQ \trans[\labelL] \procQi}] M(\procPi,\procQi)$ by definition of $\glb$. The requirement
$M(\procP,\procQ) \greatThan[\quantaleQ] \lub[{\procQ \trans[\labelL] \procQi}] \glb [{\procP \trans[\labelL] \procPi}] M(\procPi,\procQi)$ follows by
a similar argument. We now show that HMs are BMs, provided the process LTS is image-finite and $\quantaleQ$ is a total order.
So, let $M$ be a HM. The requirement $\immMetric(\procP,\procQ) \lessThan[\quantaleQ] M(\procP,\procQ)$ holds by definition of HM. We consider the case 
$M(\procP,\procQ) \lessThanStrict[\quantaleQ] \mtop[\quantaleQ]$ as otherwise the thesis is trivial. So, let $\procP \trans[\labelL] \procPi$. By definition of
HM, we have that $M(\procP,\procQ) \greatThan[\quantaleQ] \glb [{\procQ \trans[\labelL] \procQi}] M(\procPi,\procQi)$. 
Since $S = \setcomp{M(\procPi,\procQi)}{\procQ \trans[\labelL] \procQi}$ is finite (by image-finiteness) and totally ordered, we have that 
$\glb [{\procQ \trans[\labelL] \procQi}] M(\procPi,\procQi) \in S$, from which the thesis follows.
\end{proof}
}

We now show how to interpret $\quantaleQ$ as a MLTS. Morally, we just fix 
$\quantaleQ$ as the
set of states, the identity as $\val$, and self loops as transitions. This 
however violates
the requirement that the top element has no outgoing transitions. 
We therefore add the element
$\mtop[\automatonV]$. Notice that we still need $\mtop[\quantaleQ]$, as it ensures 
that $\automatonV$
is closed under $\mplus[\automatonV]$.
Let $\automatonV = (\states,\labels,\trans,\val)$ where 
$\states = \quantaleQ \uplus \{\mtop[\automatonV]\}$, $\labels$ is as in the 
underlying process LTS,
transitions are the self loops of the form $\stateS \trans[\labelL] \stateS$ 
for every
$\labelL \in \labels$, and $\stateS \in \quantaleQ$, $\val$ is the identity on 
$\quantaleQ$, and $\val(\mtop[\automatonV]) = \mtop[\quantaleQ]$.
Notice that, when $\val \stateS \lessThanStrict[\quantaleQ] \mtop[\quantaleQ]$, we have that:
\begin{equation}\label{ex:beh-bisimilarity:eq1}
\val \stateS \lessThan[\quantaleQ] \val \stateSi  \iff \stateS \lessThan[\automatonV] \stateSi.
\end{equation}
We also have that for every behavioural metric there is a CBM that ``agrees'' 
on the quantitative distance between processes. This intuition is 
formalized as follows:

\begin{proposition}
Let $M$ be a behavioural metric, and let $m_{M}$ be defined as:
\[m_{M}(\procP,\procQ) = \begin{cases} M(\procP,\procQ) & \text{if}\;  
M(\procP,\procQ) \lessThanStrict[\quantaleQ] \mtop[\quantaleQ]\\
\mtop[\automatonV] & \text{otherwise}
\end{cases}\]
Then, $m_{M}$ is a CBM and for every $\procP,\procQ$ it holds that
$\val m_{M}(\procP,\procQ) = M(\procP,\procQ)$.
\end{proposition}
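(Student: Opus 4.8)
The plan is to check three things, which together give that $m_{M}$ is a CBM: the stated identity $\val m_{M}(\procP,\procQ) = M(\procP,\procQ)$, that $m_{M}$ is a contextual bisimulation map, and that $m_{M}$ is a metric valued in the quantale $\automatonV$. The identity is immediate by case analysis on the definition of $m_{M}$: when $M(\procP,\procQ) \lessThanStrict[\quantaleQ] \mtop[\quantaleQ]$ we have $m_{M}(\procP,\procQ) = M(\procP,\procQ) \in \quantaleQ$ and $\val$ is the identity on $\quantaleQ$, while when $M(\procP,\procQ) = \mtop[\quantaleQ]$ we have $m_{M}(\procP,\procQ) = \mtop[\automatonV]$ and $\val \mtop[\automatonV] = \mtop[\quantaleQ] = M(\procP,\procQ)$. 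Reflexivity and symmetry then follow from those of $M$: since $M(\procP,\procP) = \mbot[\quantaleQ] \lessThanStrict[\quantaleQ] \mtop[\quantaleQ]$ we get $m_{M}(\procP,\procP) = \mbot[\quantaleQ] = \mbot[\automatonV]$, and $M(\procP,\procQ) = M(\procQ,\procP)$ makes the two defining cases coincide, so $m_{M}(\procP,\procQ) = m_{M}(\procQ,\procP)$.

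For the contextual bisimulation map conditions, \cref{def:cbm:item1} reduces to $\immMetric(\procP,\procQ) \lessThan[\quantaleQ] \val m_{M}(\procP,\procQ) = M(\procP,\procQ)$, which is part of the definition of a behavioural metric. For \cref{def:cbm:item2}, the key remark is that $m_{M}(\procP,\procQ)$ has an outgoing transition only when $m_{M}(\procP,\procQ) \in \quantaleQ$, i.e.\ when $M(\procP,\procQ) \lessThanStrict[\quantaleQ] \mtop[\quantaleQ]$; in the complementary case $m_{M}(\procP,\procQ) = \mtop[\automatonV]$ has no transitions and the condition is vacuous. So assume $M(\procP,\procQ) \lessThanStrict[\quantaleQ] \mtop[\quantaleQ]$; the only transition is the self-loop, so $\stateSi = m_{M}(\procP,\procQ)$. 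Given $\procP \trans[\labelL] \procPi$, the behavioural metric clause yields $\procQi$ with $\procQ \trans[\labelL] \procQi$ and $M(\procPi,\procQi) \lessThan[\quantaleQ] M(\procP,\procQ)$; in particular $M(\procPi,\procQi) \lessThanStrict[\quantaleQ] \mtop[\quantaleQ]$, so $m_{M}(\procPi,\procQi) = M(\procPi,\procQi) \in \quantaleQ$. Since $\val m_{M}(\procPi,\procQi) \lessThanStrict[\quantaleQ] \mtop[\quantaleQ]$, I can apply \eqref{ex:beh-bisimilarity:eq1} to turn $M(\procPi,\procQi) \lessThan[\quantaleQ] M(\procP,\procQ)$ into $m_{M}(\procPi,\procQi) \lessThan[\automatonV] m_{M}(\procP,\procQ) = \stateSi$, as required; the matching $\procQ$-moves are symmetric.

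It remains to prove the triangle inequality $m_{M}(\procP[1],\procP[3]) \lessThan[\automatonV] m_{M}(\procP[1],\procP[2]) \mplus[\automatonV] m_{M}(\procP[2],\procP[3])$, which I expect to be the main obstacle. When $M(\procP[1],\procP[3]) \lessThanStrict[\quantaleQ] \mtop[\quantaleQ]$ the argument is routine: by the value axiom of the MLTS together with $\val m_{M} = M$, the right-hand side has value $M(\procP[1],\procP[2]) \mplus[\quantaleQ] M(\procP[2],\procP[3])$, and since $\val m_{M}(\procP[1],\procP[3]) \lessThanStrict[\quantaleQ] \mtop[\quantaleQ]$, \eqref{ex:beh-bisimilarity:eq1} reduces the claim to the triangle inequality of $M$. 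The delicate case is $M(\procP[1],\procP[3]) = \mtop[\quantaleQ]$, where $m_{M}(\procP[1],\procP[3]) = \mtop[\automatonV]$ is the top of $\automatonV$: the inequality can then hold only if the right-hand side is itself $\mequiv[\automatonV] \mtop[\automatonV]$. From the triangle inequality of $M$ we get $M(\procP[1],\procP[2]) \mplus[\quantaleQ] M(\procP[2],\procP[3]) = \mtop[\quantaleQ]$; the point is to show that one of the summands $m_{M}(\procP[1],\procP[2]), m_{M}(\procP[2],\procP[3])$ equals $\mtop[\automatonV]$, since $\mtop[\automatonV]$ absorbs $\mplus[\automatonV]$ (indeed $\mtop[\automatonV] \mplus[\automatonV] \stateS$ has value $\mtop[\quantaleQ]$ and no transitions, hence equals $\mtop[\automatonV]$), forcing the whole sum to be $\mtop[\automatonV]$ and closing the case. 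This last step is exactly where one needs $\mtop[\quantaleQ]$ to be \emph{prime} with respect to $\mplus[\quantaleQ]$ (that is, $\elE \mplus[\quantaleQ] \elF = \mtop[\quantaleQ]$ implies $\elE = \mtop[\quantaleQ]$ or $\elF = \mtop[\quantaleQ]$), a property that holds in the canonical distance quantale $[0,\infty]$; without it the $\automatonV$-sum could instead collapse to the element $\mtop[\quantaleQ] \in \quantaleQ$, which carries self-loops and is strictly below $\mtop[\automatonV]$, so the bound would fail.
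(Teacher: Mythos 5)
Your proof takes essentially the same route as the paper's: the identity $\val m_{M}(\procP,\procQ) = M(\procP,\procQ)$ is read off the definition, the bisimulation-map conditions are discharged exactly as in the paper (a transition of $m_{M}(\procP,\procQ)$ forces $M(\procP,\procQ) \lessThanStrict[\quantaleQ] \mtop[\quantaleQ]$ and $\stateSi = m_{M}(\procP,\procQ)$, after which the behavioural-metric clause and \cref{ex:beh-bisimilarity:eq1} close the case), and the triangle inequality is split on whether $M(\procP[1],\procP[3])$ is maximal. The one place where you go beyond the paper is the maximal case: there the paper simply asserts, ``since $M$ is a metric'', that $\val m_M(\procP[1],\procP[2]) = \mtop[\quantaleQ]$ or $\val m_M(\procP[2],\procP[3]) = \mtop[\quantaleQ]$, and this is precisely the primality of $\mtop[\quantaleQ]$ with respect to $\mplus[\quantaleQ]$ that you isolate as an extra hypothesis --- it holds in $[0,\infty]$ but fails, e.g., in $[0,1]$ with truncated addition, and your analysis of what goes wrong without it (the $\automatonV$-sum collapsing to the self-looping state $\mtop[\quantaleQ] \in \quantaleQ$, which is strictly below $\mtop[\automatonV]$ and hence does not dominate $\mtop[\automatonV]$) is correct. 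So your added assumption is not redundant caution: it makes explicit a step that the paper's own proof uses silently and that does not hold over an arbitrary quantale.
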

\longversion{
\begin{proof}
The fact that $\val m_{M}(\procP,\procQ) = M(\procP,\procQ)$ follows immediately from the definition. It remains to show that
$m_{M}$ is really a CBM. We first show that it is a contextual bisimulation map. Notice that $\immMetric(\procP,\procQ)
\lessThan[\quantaleQ] \val m_{M}(\procP,\procQ)$ follows immediately from the 
definitions of $m_{M}$ and behavioural metrics.
So, suppose $m_{M}(\procP,\procQ) \trans[\labelL] \stateSi$ and $\procP \trans[\labelL] \procPi$. Then, 
$\val m_{M}(\procP,\procQ) = M(\procP,\procQ) \lessThanStrict[\quantaleQ] \mtop[\quantaleQ]$ and 
$m_{M}(\procP,\procQ) = \stateSi$. Therefore $\procQ \trans[\labelL] \procQi$ for some $\procQi$ such that 
$M(\procPi,\procQi) \lessThan[\quantaleQ] M(\procP,\procQ)$. Since 
$\val m_M(\procPi,\procQi) = M(\procPi,\procQi) \lessThanStrict[\quantaleQ] \mtop[\quantaleQ]$, we have that
$m_M(\procPi,\procQi) \lessThan[\quantaleQ] m_{M}(\procP,\procQ) = \stateSi$, as required. The case for a $\procQ$ move is similar.
It remains to show that $m_M$ is a metric. We only show triangle inequality, that is:
$m_M(\procP[1],\procP[3]) \lessThan[\automatonV] m_M(\procP[1],\procP[2]) \mplus[\automatonV] m_M(\procP[2],\procP[3])$.
If $\val m_M(\procP[1],\procP[3]) \lessThanStrict[\quantaleQ] \mtop[\quantaleQ]$, the thesis follows from the fact that $M$ is a metric
and \cref{ex:beh-bisimilarity:eq1}.
If $\val m_M(\procP[1],\procP[3]) = \mtop[\quantaleQ]$, it must be $m_M(\procP[1],\procP[3]) = \mtop[\automatonV]$. 
Since $M$ is a metric, we have that $\val m_M(\procP[1],\procP[2]) = \mtop[\quantaleQ]$ or 
$\val m_M(\procP[2],\procP[3]) = \mtop[\quantaleQ]$. Then  $m_M(\procP[1],\procP[3]) = \mtop[\automatonV]$ or
$m_M(\procP[2],\procP[3]) = \mtop[\automatonV]$. In both cases $m_M(\procP[1],\procP[2]) \mplus[\automatonV] 
m_M(\procP[2],\procP[3]) = \mtop[\automatonV]$, as required.
\end{proof}
}
\shortversion{
The agreement of $M$ and $m_{M}$ holds by definition. The fact that $m_{M}$ is 
a CBM, instead is a consequence of the fact that transitions preserve $M$ 
distances (by definition of behavioural metric), and that behavioural metrics 
are metrics, indeed.
}

\subsection{On Environment Parametrised Bisimulation and CBMs}
\label{sec:ex-param-bisim}
As already mentioned, the concept of a CBM is inspired by Larsen's 
environment parametrized bisimulation~\cite{Larsen87}. It should then come 
with no surprise that there is a relationship between the two, which is the 
topic of this section.

First, let us recall what an environment parametrized bisimulation is. Let 
$(\proc,\labels,\trans)$ and $(\env,\labels,\trans)$ be LTSs. Elements of 
$\proc$ are called processes, while elements of $\env$ are called environments.
A $\env$-indexed family of relations $\{\relR_{\elE}\}$, where $\relR_{\elE} 
\subseteq 
\proc \times \proc$ is a \emph{environment parametrized bisimulation} (EPB in 
the following) if, whenever 
$\procP \;\relR_{\elE}\; \procQ$ and 
$\elE \trans[\labelL] \elEi$:
\begin{itemize}
\item
$\procP \trans[\labelL] \procPi \implies \exists \procQi: \procQ \trans[\labelL] 
\procQi \text{ and } \procPi \relR_{\elEi} \procQi$;
\item
$\procQ \trans[\labelL] \procQi \implies \exists \procPi: \procP \trans[\labelL] 
\procPi \text{ and } \procPi \relR_{\elEi} \procQi$.
\end{itemize}
Environment parametrized bisimilarity, denoted as 
$\sim_{\elE}$, is defined as $\procP \sim_{\elE} \procQ$ iff $\procP\; 
\relR_{\elE}\;\procQ$ for
some EPB $\relR$. It turns out that $\sim_{\elE}$ is the largest
EPB~\cite{Larsen87}.

EPBs can be embedded into the CBMs framework as follows:
\begin{itemize}
\item
fix $\quantaleQ$ as the boolean quantale $\quantaleB$, and define 
$\immMetric(\procP,\procQ) = \begin{cases}
\mbot[\quantaleB] & \text{if}\;\exists \labelL: \procP \trans[\labelL]\;\text{and}\;
\procQ \trans[\labelL]\\
\mtop[\quantaleB] & \text{otherwise} \end{cases}$
\item
let $\automatonV_{\env} = (\states,\labels,\trans,\val)$ be any MLTS such that for all 
$
\stateS \in \states$ it holds that
$\val \stateS = \mbot[\quantaleB] \iff \stateS \neq \mtop[\automatonV]$, and 
for all $
\elE \in \env$
there is $\stateS[\elE] \in \states$ such that $\elE \mutsimilar \stateS[\elE]$. Here 
$\mutsimilar$ is strong mutual similarity on the disjoint union of 
$\automatonV$ (forgetting $\val$) and $\env$. When such conditions hold, we say that $\env$
is embedded into $\automatonV_{\env}$.
\end{itemize}
We remark that, for every $\env$, 
there is an MLTS $\automatonV_{\env}$ enjoying the properties above, obtained by
augmenting $\env$ with the immediate metric defined above (this gives rise to a pre-metric LTS,
\cref{def:PMLTS}) 
and by closing it with respect to the operations and constants $\lub,\glb,\mtop,\mbot$ of
\cref{def:MLTS}. 
The intuition is that:
\begin{itemize}
\item Two processes should have minimal immediate distance if there is a non-empty 
context in which their immediate behaviour is equivalent. This is 
ensured
by the fact that they exhibit at least a common label from their current state.
\item $\automatonV_{\env}$ needs to precisely simulate the behaviours 
in $\env$. We therefore require that every element of $\env$ has a 
corresponding 
element in $\stateS$, with ``equivalent behaviour''. In this setting, 
mutual simulation turns out to be the appropriate notion of behavioural 
equivalence. 
\end{itemize} 

The link between environment parametrized bisimulations and CBMs is made formal
by the following proposition.
\begin{proposition}
Let $\env$ be an environment LTS embedded into an MLTS $\automatonV_{\env}$. For every $\procP,\procQ$ and $\elE$, it holds that 
$\procP \sim_{\elE} \procQ \implies \metric(\procP,\procQ) 
\lessThan[\automatonV_{\env}] 
\stateS[\elE]
$.
\end{proposition}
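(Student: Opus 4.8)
The plan is to route everything through \cref{lem:metric-bisim}, which identifies $\metric(\procP,\procQ) \lessThan[\automatonV_{\env}] \stateS[\elE]$ with the parametrized bisimilarity $\procP \sim_{\stateS[\elE]} \procQ$ of \cref{def:parameterised-bisimulation}. So it suffices to prove the implication $\procP \sim_{\elE} \procQ \implies \procP \sim_{\stateS[\elE]} \procQ$, i.e.\ to transport environment parametrized bisimilarity over $\env$ into parametrized bisimilarity over the MLTS $\automatonV_{\env}$. Since $\sim_{\stateS}$ is the largest parametrized bisimulation, I only need to exhibit \emph{one} parametrized bisimulation relating $\procP$ and $\procQ$ at index $\stateS[\elE]$ whenever $\procP \sim_{\elE} \procQ$.

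First I introduce the candidate $\states$-indexed family $\{\relR_{\stateS}\}$ by
\[
\procP \; \relR_{\stateS} \; \procQ \iff \exists \elE \in \env: \stateS \similar \elE \;\text{and}\; \procP \sim_{\elE} \procQ,
\]
where $\similar$ is the strong similarity underlying the embedding (one half of the mutual similarity $\mutsimilar$ on the disjoint union of $\automatonV_{\env}$, forgetting $\val$, and $\env$). Note that $\elE \mutsimilar \stateS[\elE]$ yields in particular $\stateS[\elE] \similar \elE$, so $\procP \sim_{\elE} \procQ$ already gives $\procP \relR_{\stateS[\elE]} \procQ$: this is the only fact I need to seed the argument.

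Next I verify the bisimulation clause. Suppose $\procP \relR_{\stateS} \procQ$, witnessed by $\elE$, and $\stateS \trans[\labelL] \stateSi$. Because $\stateS \similar \elE$, there is $\elEi$ with $\elE \trans[\labelL] \elEi$ and $\stateSi \similar \elEi$, so I can feed the transition $\elE \trans[\labelL] \elEi$ into the defining clause of $\procP \sim_{\elE} \procQ$: every move $\procP \trans[\labelL] \procPi$ is matched by some $\procQ \trans[\labelL] \procQi$ with $\procPi \sim_{\elEi} \procQi$, and symmetrically for $\procQ$-moves. Taking $\elEi$ as the new witness together with $\stateSi \similar \elEi$ gives $\procPi \relR_{\stateSi} \procQi$, which is exactly what the matching clauses of \cref{def:parameterised-bisimulation} demand. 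Thus the transition clause closes, and — importantly — it needs only one direction of the embedding's similarity, namely the one that pulls $\stateS$-moves back to $\elE$-moves.

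The remaining condition, $\immMetric(\procP,\procQ) \lessThan[\quantaleB] \val \stateS$, is the genuinely delicate one and is where I expect the main obstacle. When $\stateS = \mtop[\automatonV]$ it holds vacuously, since $\val \mtop[\automatonV] = \mtop[\quantaleB]$. Otherwise the embedding constraint $\val \stateS = \mbot[\quantaleB] \iff \stateS \neq \mtop[\automatonV]$ forces $\val \stateS = \mbot[\quantaleB]$, and the clause reduces to producing a label enabled in \emph{both} $\procP$ and $\procQ$, i.e.\ to showing $\immMetric(\procP,\procQ) = \mbot[\quantaleB]$. This is precisely the point at which the boolean immediate metric must be reconciled with EPB-relatedness: a common label has to be extracted from a shared $\elE$-transition (present because $\stateS \similar \elE$ with $\stateS \neq \mtop[\automatonV]$) through the matching clause of $\sim_{\elE}$, and it is the only step that uses the concrete definitions of $\immMetric$ and $\val$ rather than the purely relational machinery. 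I expect this to require care, and possibly either a structural/liveness assumption on the embedding or a tighter choice of index relation than the naive one above. Once this clause is secured, $\{\relR_{\stateS}\}$ is a parametrized bisimulation, hence contained in $\{\sim_{\stateS}\}$, and the chain $\procP \sim_{\elE} \procQ \implies \procP \relR_{\stateS[\elE]} \procQ \implies \procP \sim_{\stateS[\elE]} \procQ$ combined with \cref{lem:metric-bisim} delivers $\metric(\procP,\procQ) \lessThan[\automatonV_{\env}] \stateS[\elE]$.
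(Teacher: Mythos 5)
Your overall route coincides with the paper's: the paper likewise derives the proposition from \cref{lem:metric-bisim} together with the transfer $\procP \sim_{\elE} \procQ \implies \procP \sim_{\stateS[\elE]} \procQ$, and your candidate family $\relR_{\stateS}$, indexed by states similar to some environment, is the natural way to make that transfer precise. Your verification of the transition clause is correct, and it is exactly the half of the argument that the paper leaves implicit.

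The difficulty you flag at the immediate-metric clause is, however, a genuine gap, and it cannot be closed by bookkeeping alone: with the definitions exactly as given, the clause can fail. Take $\labels = \{\labelA,\labelB,\labelC\}$, let $\elE$ enable only $\labelA$, let $\procP$ enable only $\labelB$ and $\procQ$ only $\labelC$. Then $\procP \sim_{\elE} \procQ$ holds vacuously (neither process fires $\labelA$), yet $\immMetric(\procP,\procQ) = \mtop[\quantaleB]$ since there is no common label, while $\elE \mutsimilar \stateS[\elE]$ forces $\stateS[\elE] \trans[\labelA]$, hence $\stateS[\elE] \neq \mtop[\automatonV_{\env}]$ and $\val \stateS[\elE] = \mbot[\quantaleB]$. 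So $\procP \not\sim_{\stateS[\elE]} \procQ$ and, by \cref{lem:metric-bisim}, $\metric(\procP,\procQ) \not\lessThan[\automatonV_{\env}] \stateS[\elE]$. The implication therefore needs a side condition --- e.g.\ that whenever $\elE$ is not terminated, $\procP$ or $\procQ$ enables at least one action that $\elE$ enables, hereditarily along the bisimulation --- under which the EPB matching clause does produce a common enabled label and your relation closes. Your instinct that either a liveness assumption on the embedding or a tighter index relation is required is exactly right; the paper's one-sentence justification, which asserts $\procP \sim_{\elE} \procQ \iff \procP \sim_{\stateS[\elE]} \procQ$ outright, passes over this same point.
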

The proposition above ultimately follows from the fact that $\procP \sim_{\elE} 
\procQ \iff \procP \sim_{\stateS[\elE]} \procQ$ (where $\sim_{\stateS[\elE]}$ 
is parametrized bisimilarity 
\cref{def:parameterised-bisimulation}) together with \cref{lem:metric-bisim}. 

\section{About the Compositionality of CBMs}\label{sect:Composition}
One of the greatest advantages of the bisimulation proof method is its modularity, 
which comes from the fact that, under reasonable assumptions, bisimilarity is a 
congruence. In a metric setting, one strives to obtain similar 
properties~\cite{GeblerLT16,GeblerT18}, which take the form of 
non-expansiveness, or variations thereof.

In this section we study the compositionality properties of CBMs with respect to some standard 
process algebraic operators. We are interested in properties that generalise 
the concept of a congruence. Following the lines of
\cite{LagoGY19,Pistone21}, our treatment will be contextual,
meaning that the environment in which processes are deployed can indeed contribute 
to altering their distance, although in a controlled way.

In order to keep our theory syntax independent, we model operators $f$ as functions
$f:\proc^n \to \proc$ (where $n$ is the arity of the operator).
In particular, for each process operator $f$ of arity $n$ we define the function
$\hat{f}: \proc^{n} \times \states^n \to \states$ as follows:
\[
\hat{f} (\procP[1],\hdots,\procP[n],\stateS[1],\hdots,\stateS[n]) = 
\lub[\automatonV]{\setcomp{\metric(f(\procP[1],\hdots,\procP[n]),
f(\procPi[1],\hdots,\procPi[n]))}{\forall 1 \leq i \leq n: \metric(\procP[i],\procPi[i])
\lessThan[\automatonV] \stateS[i]}}
\]
Intuitively, $\hat{f}(\vec{\procP},\vec{\stateS})$
bounds $\metric(f(\vec{\procP}),f(\vec{\procQ}))$ whenever $\vec{\procQ}$ is such that 
$\metric(\procP[i],\procQ[i]) \lessThan[\automatonV] \stateS[i]$ for every $i$. 
Moreover, $\hat{f}(\vec{\procP},\vec{\stateS})$ is the lowest among such bounds.

Of course, our compositionality results rely on some assumptions on the
compositionality of the immediate metric $\immMetric$. Formally, we require that,
for all operators $f$ (with arity $n$), the following holds for every
$\procP[1],\hdots,\procP[n],\procQ[1],\hdots,\procQ[n]$:
\begin{equation}\label{eq:comp-imm}
\immMetric(f(\procP[1],\hdots,\procP[n]),f(\procQ[1],\hdots,\procQ[n]))
\lessThan[\quantaleQ] \immMetric(\procP[1],\procQ[1]) \mplus[\quantaleQ] \hdots
\mplus[\quantaleQ] \immMetric(\procP[n],\procQ[n]).
\end{equation}


%

Below, we will give results about when and under which condition the value of the operator $\hat{f}$ can be upper-bounded by a function on its parameters. We remark that
our compositionality results apply to each operator independently.

For the sake of concreteness, we give some examples of processes 
and their metric analysis. To this purpose, let $\labels = 
\{\labelA,\labelB\}$, fix $\quantaleQ$ as the boolean 
quantale $\quantaleB$ and let $\immMetric$ be defined exactly as we did in 
\cref{sec:ex-param-bisim} (i.e., $\immMetric$ returns $\mbot$ if the processes 
can fire some common action, $\mtop$ 
otherwise). Distances will take values from a MLTS 
$\automatonV_0$ over $\quantaleB$. Similarly to \cref{sec:ex-param-bisim}, we require 
$\automatonV_0$ to be such that 
for every $\stateS \in \states$ it holds that $\val \stateS = \mbot[\quantaleB] 
\iff \stateS \neq 
\mtop[\automatonV_0]$. Moreover, we assume that $\automatonV_0$ is able to represent
at least Milner's synchronisation trees \cite{Milner80}. 
For simplicity, we omit self loops of $\mbot[\automatonV_0]$ from
all the graphical representations of our MLTS. 
Of course, these assumptions hold only in the examples, while our results hold for
general MLTSs.


\subsection{Restriction}
We assume restriction to be modelled by a $\labels$-indexed family of unary 
operators $\new_{\labelL}$, and that $\proc$ is closed under these operators.
\shortversion{
Their semantics is standard.}
\longversion{
Their semantics can be defined in a standard way:
\[
\irule{\procP \trans[\labelLi]\procPi \qquad \labelLi \neq \labelL}
{\new_{\labelL}\;\procP \trans[\labelLi]\new_{\labelL}\;\procPi}{}
\]}

\begin{example}\label{ex:restriction}
Let $\procP[0]$ and $\procQ[0]$ be as in the following 
figure. We have that $\procP[0]$
and $\procQ[0]$ have the exact same behaviour on the $\labelB$ branch, while we 
can observe
differences on the $\labelA$ branch ($\procQ[1]$ can perform an action, $\procP[1]$ is 
terminated). State $\stateS[0]$ captures exactly the similarities between $\procP[0]$
and $\procQ[0]$: after a $\labelB$ move it reduces to $\mbot$; after an $\labelA$ move,
it reduces to $\stateS[1]$. We argue that $\stateS[1]$ captures the similarities between
$\procP[1]$ and $\procQ[1]$: since neither of the two can perform the action $\labelA$,
$\stateS[1]$ reduces to $\mbot$ with label $\labelA$, while it does not perform $\labelB$
actions because $\procP[1]$ and $\procQ[1]$ ``disagree'' on such label. 
So $\metric(\procP[0],\procQ[0])=\stateS[0]$. 
Processes $(\new \labelA) \procP[0]$ and $(\new \labelA) \procQ[0]$ exhibit
equivalent behaviour instead. In fact, operator $(\new \labelA)$ filters out the
problematic $\labelA$ branch. It is therefore the case that 
$\metric((\new \labelA)\procP[0],(\new \labelA)\procQ[0])=\mbot$.
\begin{center}
	\includegraphics{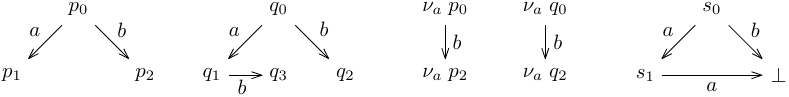}
	\commentout{
\begin{tikzpicture}[node distance = 2cm, on grid,auto]
\node (q0) [state][label =above:{$\procP[0]$}] {};
\node (q1) [state, below left = of q0][label =above:{$\procP[1]$}] {};
\node (q2) [state, below right = of q0][label =above:{$\procP[2]$}] {};
\node (p0) [state,right =4.5cm of q0][label =above:{$\procQ[0]$}] {};
\node (p1) [state, below left = of p0][label =above:{$\procQ[1]$}] {};
\node (p2) [state, below right = of p0][label =above:{$\procQ[2]$}] {};
\node (p3) [state, below = of p0][label =above:{$\procQ[3]$}] {};
\node (s0) [state,right =3.5cm of p0][label =above:{$\stateS[0]$}] {};
\node (s1) [state, below = of s0][label =left:{$\stateS[1]$}] {};
\node (s2) [state, below right = of s0][label =right:{$\mbot$}] {};
\path [->]
(q0) edge [] node {$\labelA$}   (q1)
(q0) edge [] node {$\labelB$}   (q2)
(p0) edge [] node {$\labelA$}   (p1)
(p0) edge [] node {$\labelB$}   (p2)
(p1) edge [] node {$\labelB$}   (p3)
(s0) edge [] node {$\labelA$}   (s1)
(s0) edge [] node {$\labelB$}   (s2)
(s1) edge [] node {$\labelA$}   (s2)
;
\end{tikzpicture}
\\[5pt]
\begin{tikzpicture}[node distance = 2cm, on grid,auto]
\node (q0) [state][label =above:{$(\new \labelA) \procP[0]$}]  {};
\node (q2) [state, right = of q0][label =above:{$(\new \labelA) \procP[2]$}] {};
\node (p0) [state,right =4.5cm of q0][label =above:{$(\new \labelA) 
\procQ[0]$}] {};
\node (p2) [state, right = of p0][label =above:{$(\new \labelA) \procQ[2]$}] {};
\path [->]
(q0) edge [] node {$\labelB$}   (q2)
(p0) edge [] node {$\labelB$}   (p2)
;
\end{tikzpicture}}
\end{center}
\qed
\end{example}

The restriction operator does not add new behaviours to the original process,
as it can only restrict it. We can then expect that the differences between any two
processes do not increase if such processes are placed in a restriction context.
Proposition below indeed shows that $\hat{\new_{\labelL}}$ enjoys a property
similar to non-expansiveness, that is the distance between any two processes $\procP$ and $\procQ$ bounds
the distance between $\new_{\labelL}\; \procP$ and $\new_{\labelL}\; \procQ$.

\begin{proposition}
$\hat{\new_{\labelL}}(\procP,\stateS) \lessThan[\automatonV] \stateS$.
\end{proposition}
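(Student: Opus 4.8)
The plan is to exploit the least-upper-bound property of $\lub[\automatonV]$ established in \cref{lem:mlts-properties} (item 4). Since $\hat{\new_\labelL}(\procP,\stateS)$ is by definition the join
\[
\hat{\new_\labelL}(\procP,\stateS) = \lub[\automatonV] X, \qquad X = \setcomp{\metric(\new_\labelL\;\procP,\new_\labelL\;\procQ)}{\metric(\procP,\procQ) \lessThan[\automatonV] \stateS},
\]
it suffices to show that $\stateS$ is an upper bound of $X$, i.e.\ that $\metric(\new_\labelL\;\procP,\new_\labelL\;\procQ) \lessThan[\automatonV] \stateS$ for every $\procQ$ with $\metric(\procP,\procQ) \lessThan[\automatonV] \stateS$. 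Applying \cref{lem:metric-bisim} to both the hypothesis and the goal, this is equivalent to the implication
\[
\procP \sim_\stateS \procQ \implies \new_\labelL\;\procP \sim_\stateS \new_\labelL\;\procQ,
\]
so the whole statement reduces to a closure property of parametrized bisimilarity under the restriction context, crucially with the MLTS parameter $\stateS$ left unchanged.

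To establish this implication I would introduce the candidate $\states$-indexed family
\[
\relR_\stateS = \setcomp{(\new_\labelL\;\procP,\new_\labelL\;\procQ)}{\procP \sim_\stateS \procQ}
\]
and prove that it is a parametrized bisimulation; the implication then follows since $\{\sim_\stateS\}$ is the largest such family. For the immediate-distance clause I would combine the compositionality assumption \eqref{eq:comp-imm}, instantiated at the unary operator $\new_\labelL$ to give $\immMetric(\new_\labelL\;\procP,\new_\labelL\;\procQ) \lessThan[\quantaleQ] \immMetric(\procP,\procQ)$, with the fact that $\procP \sim_\stateS \procQ$ entails $\immMetric(\procP,\procQ) \lessThan[\quantaleQ] \val\stateS$; transitivity of $\lessThan[\quantaleQ]$ then yields the required bound. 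For the transfer clause, suppose $\stateS \trans[\labelLi] \stateSi$ and $\new_\labelL\;\procP \trans[\labelLi] \procPii$. By the restriction rule this forces $\labelLi \neq \labelL$ and $\procPii = \new_\labelL\;\procPi$ for some $\procPi$ with $\procP \trans[\labelLi] \procPi$. Since $\procP \sim_\stateS \procQ$, there is $\procQi$ with $\procQ \trans[\labelLi] \procQi$ and $\procPi \sim_{\stateSi} \procQi$; as $\labelLi \neq \labelL$, the restriction rule gives $\new_\labelL\;\procQ \trans[\labelLi] \new_\labelL\;\procQi$, and $(\new_\labelL\;\procPi,\new_\labelL\;\procQi) \in \relR_{\stateSi}$ by construction. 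The symmetric case, starting from a move of $\new_\labelL\;\procQ$, is identical.

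The conceptual reason the argument should go through so smoothly is that $\new_\labelL$ is \emph{behaviour-removing}: its transitions are in bijection with the non-$\labelL$ transitions of its argument, so matching a move of $\new_\labelL\;\procP$ only ever requires matching the corresponding move of $\procP$, which the hypothesis already supplies at the \emph{same} parameter $\stateS$. No weakening or enlargement of $\stateS$ is needed, which is precisely why restriction admits a non-expansiveness-style bound rather than one that degrades the parameter. I expect the only point genuinely needing care to be the immediate-distance clause, where the compositionality hypothesis \eqref{eq:comp-imm} on $\immMetric$ is indispensable: without it there would be no control over $\immMetric(\new_\labelL\;\procP,\new_\labelL\;\procQ)$, and the first clause of the parametrized-bisimulation definition could fail even when the transfer clause holds.
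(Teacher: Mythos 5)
Your proposal is correct and follows essentially the same route as the paper: both reduce the claim to showing that the family $\relR_{\stateS} = \setcomp{(\new_{\labelL}\,\procP,\new_{\labelL}\,\procQ)}{\metric(\procP,\procQ) \lessThan[\automatonV] \stateS}$ (equivalently, via \cref{lem:metric-bisim}, the one you index by $\procP \sim_{\stateS} \procQ$) is a parametrized bisimulation, handling the immediate-distance clause via \cref{eq:comp-imm} and the transfer clause by inversion on the restriction rule. Your write-up is in fact slightly more careful in making explicit the least-upper-bound step and in landing the successor pair in $\relR_{\stateSi}$ rather than $\relR_{\stateS}$.
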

\longversion{
\begin{proof}
It suffice to show that $\relR_{\stateS} = \setcomp{(\new_{\labelL} \procP,
\new_{\labelL} \procQ)}{\metric(\procP,\procQ) \lessThan[\automatonV] \stateS}$ 
is a parametrized
bisimulation. So, let $\new_{\labelL} \procP\; \relR_{\stateS}\; \new_{\labelL} \procQ$.
Condition $\immMetric(\new_{\labelL} \procP,\new_{\labelL} \procQ) 
\lessThan[\quantaleQ] \val \stateS$ follows by \cref{eq:comp-imm}. For condition 2,
suppose $\stateS \trans[\labelLi] \stateSi$ and $\new_{\labelL} \procP \trans[\labelLi]
\hat{\procP}$. By inversion, we have that $\labelL \neq \labelLi$ and $\hat{\procP} = 
\new_{\labelL} \procPi$ for some $\procPi$ such that $\procP \trans[\labelLi] \procPi$.
Then $\procQ \trans[\labelLi] \procQi$ for some $\procQi$ such that 
$\metric(\procPi,\procQi) \lessThan[\automatonV] \stateSi$. Therefore:
\[
\irule{\procQ \trans[\labelLi]\procQi \qquad \labelLi \neq \labelL}
{\new\; \labelL\; \procQ \trans[\labelLi]\new\; \labelL\; \procQi}{}
\]
Moreover, $\new_{\labelL} \procPi\; \relR_{\stateS}\; \new_{\labelL} \procQi$ as 
required.
\end{proof}
}

\subsection{Prefixing}
We assume that $\proc$ is closed under operator $\pref: \labels \times \proc \to \proc$, 
whose semantics is standard.
\longversion{
\[
\irule{}{\labelL \pref \procP \trans[\labelL] \procP}{}
\]}
We proceed similarly to the case of $\new$: we treat the prefix operator
$\pref$ as an $\labels$-indexed family of unary operators $\pref_{\labelL}$. 

\begin{example}
Let $\procP[0]$ and $\procQ[0]$ be as in \Cref{ex:restriction}. 
Since $\labelB \pref \procP[0]$ and $\labelB \pref \procQ[0]$ can only reduce with
a $\labelB$ move to, respectively, $\procP[0]$ and $\procQ[0]$, their distance 
$\metric(\labelB \pref \procP[0],\labelB \pref \procQ[0])$ should reduce to
$\metric(\procP[0],\procQ[0]) = \stateS[0]$. Moreover, after performing an $\labelA$ action, $\metric(\labelB \pref \procP[0],\labelB \pref \procQ[0])$ should reduce to $\mbot$.

\begin{center}
	\includegraphics{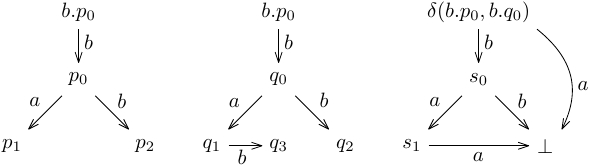}
	\commentout{
\begin{tikzpicture}[node distance = 2cm, on grid,auto]
\node (q00)[state, above = of q0][label =above:{$\labelB \pref \procP[0]$}] {};
\node (q0) [state][label =left:{$\procP[0]$}] {};
\node (q1) [state, below left = of q0][label =above:{$\procP[1]$}] {};
\node (q2) [state, below right = of q0][label =above:{$\procP[2]$}] {};
\node (p00)[state, above = of p0][label =above:{$\labelB \pref \procQ[0]$}] {};
\node (p0) [state,right =4.5cm of q0][label =left:{$\procQ[0]$}] {};
\node (p1) [state, below left = of p0][label =above:{$\procQ[1]$}] {};
\node (p2) [state, below right = of p0][label =above:{$\procQ[2]$}] {};
\node (p3) [state, below = of p0][label =above:{$\procQ[3]$}] {};
\node (s00)[state, above = of s0][label =above:{$\metric(\labelB \pref \procP[0],
\labelB \pref \procQ[0])$}] {};
\node (s0) [state,right =3.5cm of p0][label =left:{$\stateS[0]$}] {};
\node (s1) [state, below = of s0][label =left:{$\stateS[1]$}] {};
\node (s2) [state, below right = of s0][label =right:{$\mbot$}] {};
\path [->]
(q00) edge [] node {$\labelB$}   (q0)
(q0) edge [] node {$\labelA$}   (q1)
(q0) edge [] node {$\labelB$}   (q2)
(p00) edge [] node {$\labelB$}   (p0)
(p0) edge [] node {$\labelA$}   (p1)
(p0) edge [] node {$\labelB$}   (p2)
(p1) edge [] node {$\labelB$}   (p3)
(s00) edge [] node {$\labelB$}   (s0)
(s00) edge [bend left] node {$\labelA$}   (s2)
(s0) edge [] node {$\labelA$}   (s1)
(s0) edge [] node {$\labelB$}   (s2)
(s1) edge [] node {$\labelA$}   (s2)
;
\end{tikzpicture}}
\end{center}
\qed
\end{example}

In our contextual setting, prefixing of processes can change the distance, 
and the new distance may be incomparable to the original one. 
Therefore properties like non-expansiveness
do not hold in general for $\hat{\pref_{\labelL}}$. Among the compositionality 
properties appeared in literature, uniform continuity \cite{GeblerLT15} seems 
appropriate for prefixing. Uniform continuity holds when for all 
$\stateS[\epsilon] \greatThanStrict[\automatonV] \mbot[\automatonV]$ there is 
$\stateS[\delta] \greatThanStrict[\automatonV] \mbot[\automatonV]$ such that
$\hat{\pref_{\labelL}}(\procP,\stateS[\delta]) \lessThan[\automatonV] 
\stateS[\epsilon]$. Such condition is too strong: for instance if 
$\stateS[\epsilon] \trans[\labelL] \mbot[\automatonV]$ the only option is to take 
$\stateS[\delta] = \mbot[\automatonV]$, 
hence $\stateS[\delta] \not\greatThanStrict[\automatonV] \mbot[\automatonV]$.
   
For this reason, we need a stronger property for $\stateS[\epsilon]$, namely that
the meet of the set of $\labelL$ reducts of $\stateS[\epsilon]$ is strictly greater 
than $\mbot[\automatonV]$ and its immediate value is lower than that of 
$\stateS[\epsilon]$.

\longversion{
We start with the following auxiliary lemma.
\begin{lemma}\label{lem:aux:prefix}
For all $\stateS[\epsilon]$, if 
$\stateS[\delta] = \glb[\automatonV] \setcomp{\stateS}{\stateS[\epsilon] \trans[\labelL] \stateS}$ 
is such that 
$\val \stateS[\delta] \lessThan[\quantaleQ] \val \stateS[\epsilon]$, then for all
$\procP: \hat{\pref_{\labelL}}(\procP,\stateS[\delta]) \lessThan[\automatonV] \stateS[\epsilon]$.
\end{lemma}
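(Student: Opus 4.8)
The plan is to turn the inequality into a least-upper-bound argument and then reduce it to parametrized bisimilarity via \cref{lem:metric-bisim}. By definition, $\hat{\pref_{\labelL}}(\procP,\stateS[\delta])$ is the join $\lub[\automatonV] A$ of the set $A = \setcomp{\metric(\labelL \pref \procP, \labelL \pref \procQ)}{\metric(\procP,\procQ) \lessThan[\automatonV] \stateS[\delta]}$. Since $\lub[\automatonV] A$ is the \emph{least} upper bound of $A$ (item~4 of \cref{lem:mlts-properties}), it suffices to show that $\stateS[\epsilon]$ is an upper bound of $A$, i.e. that $\metric(\labelL \pref \procP, \labelL \pref \procQ) \lessThan[\automatonV] \stateS[\epsilon]$ for every $\procQ$ with $\metric(\procP,\procQ) \lessThan[\automatonV] \stateS[\delta]$. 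Reading both inequalities through \cref{lem:metric-bisim}, this is exactly the implication: whenever $\procP \sim_{\stateS[\delta]} \procQ$, then $\labelL \pref \procP \sim_{\stateS[\epsilon]} \labelL \pref \procQ$.

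To prove the latter I would exhibit a single parametrized bisimulation that contains all the relevant prefixed pairs at index $\stateS[\epsilon]$ and otherwise coincides with parametrized bisimilarity:
\[
\relR_{\stateS[\epsilon]} = {\sim_{\stateS[\epsilon]}} \cup \setcomp{(\labelL \pref \procPi, \labelL \pref \procQi)}{\procPi \sim_{\stateS[\delta]} \procQi}, \qquad \relR_{\stateS} = {\sim_{\stateS}} \text{ for } \stateS \neq \stateS[\epsilon].
\]
Pairs inherited from $\sim$ satisfy all the clauses because $\sim$ is a parametrized bisimulation and $\sim_{\stateS} \subseteq \relR_{\stateS}$ at every index, so only the prefixed pairs require attention. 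The semantics of prefixing equips $\labelL \pref \procPi$ with the single transition $\labelL \pref \procPi \trans[\labelL] \procPi$; hence, for any move $\stateS[\epsilon] \trans[\labelLi] \stateSi$, the matching clauses hold vacuously when $\labelLi \neq \labelL$, while for $\labelLi = \labelL$ the move $\labelL \pref \procPi \trans[\labelL] \procPi$ is answered by $\labelL \pref \procQi \trans[\labelL] \procQi$ (and symmetrically). It then remains to place $(\procPi,\procQi)$ in $\relR_{\stateSi}$: since $\stateSi$ is an $\labelL$-reduct of $\stateS[\epsilon]$, the definition $\stateS[\delta] = \glb[\automatonV] \setcomp{\stateS}{\stateS[\epsilon] \trans[\labelL] \stateS}$ and item~3 of \cref{lem:mlts-properties} give $\stateS[\delta] \lessThan[\automatonV] \stateSi$, and monotonicity of parametrized bisimilarity upgrades $\procPi \sim_{\stateS[\delta]} \procQi$ to $\procPi \sim_{\stateSi} \procQi$, so $(\procPi,\procQi) \in {\sim_{\stateSi}} \subseteq \relR_{\stateSi}$.

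The only clause left is the value condition $\immMetric(\labelL \pref \procPi, \labelL \pref \procQi) \lessThan[\quantaleQ] \val \stateS[\epsilon]$ for the prefixed pairs, and this is where the hypothesis $\val \stateS[\delta] \lessThan[\quantaleQ] \val \stateS[\epsilon]$ is spent. I would chain the compositionality assumption~\eqref{eq:comp-imm} for the unary operator $\pref_{\labelL}$, namely $\immMetric(\labelL \pref \procPi, \labelL \pref \procQi) \lessThan[\quantaleQ] \immMetric(\procPi,\procQi)$, with the defining inequality $\immMetric(\procPi,\procQi) \lessThan[\quantaleQ] \val \stateS[\delta]$ of $\procPi \sim_{\stateS[\delta]} \procQi$, and finally with $\val \stateS[\delta] \lessThan[\quantaleQ] \val \stateS[\epsilon]$, concluding by transitivity of $\lessThan[\quantaleQ]$. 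I expect this value condition to be the crux: the transition matching is essentially forced by the one-step semantics of prefixing, whereas the value comparison is exactly what fails under naive uniform continuity and is rescued only by the strengthened assumption on $\stateS[\delta]$.
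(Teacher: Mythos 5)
Your proposal is correct and follows essentially the same route as the paper: reduce the claim, via the least-upper-bound reading of $\hat{\pref_{\labelL}}$ and \cref{lem:metric-bisim}, to showing that $\procP \sim_{\stateS[\delta]} \procQ$ implies $\labelL \pref \procP \sim_{\stateS[\epsilon]} \labelL \pref \procQ$, then discharge the value condition by chaining \eqref{eq:comp-imm} with $\val \stateS[\delta] \lessThan[\quantaleQ] \val \stateS[\epsilon]$, and the transition condition by the case split on $\labelLi = \labelL$ using $\stateS[\delta] \lessThan[\automatonV] \stateSi[\epsilon]$ and monotonicity of $\sim$. The only cosmetic difference is that you spell out an explicit witnessing parametrized bisimulation family where the paper verifies the two clauses directly.
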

\begin{proof}
Let $\stateS[\epsilon]$ and $\stateS[\delta]$ be as in the statement.
It suffice to show that $\procP \sim_{\stateS[\delta]}\procQ$ implies
$\labelL \pref \procP \sim_{\stateS[\epsilon]} \labelL \pref \procQ$. So, suppose
$\procP \sim_{\stateS[\delta]}\procQ$. For condition 
$\immMetric(\labelL \pref \procP,\labelL \pref \procQ) \lessThan[\quantaleQ] 
\val \stateS[\epsilon]$,  by \cref{eq:comp-imm} we have that 
$\immMetric(\labelL \pref \procP, \labelL \pref \procQ) \lessThan[\quantaleQ]
\immMetric(\procP,\procQ)$. The thesis then follows since 
$\immMetric(\procP,\procQ) \lessThan[\quantaleQ] \val \stateS[\delta]$
and $\val \stateS[\delta] \lessThan[\quantaleQ] \val \stateS[\epsilon]$.
For condition 2, suppose 
$\stateS[\epsilon] \trans[\labelLi] \stateSi[\epsilon]$.
If $\labelLi \neq \labelL$ the thesis holds trivially since neither $\labelL \pref \procP$
nor $\labelL \pref \procQ$ can fire a $\labelLi$ transition. Instead, if $\labelLi = \labelL$,
we have that the only $\labelLi$ transitions are, respectively, $\labelL \pref \procP 
\trans[\labelL] \procP$ and $\labelL \pref \procQ \trans[\labelL] \procQ$. Since
$\stateS[\delta] \lessThan[\automatonV] \stateSi[\epsilon]$ by definition and $\procP 
\sim_{\stateS[\delta]}\procQ$ by assumption, we have the thesis.
\end{proof}
}

\begin{proposition}
For all $\stateS[\epsilon] \greatThanStrict[\automatonV] \mbot[\automatonV]$ 
such that $\stateS[\labelL] = \glb[\automatonV] \setcomp{\stateS}{\stateS[\epsilon] 
\trans[\labelL] \stateS}\greatThanStrict \mbot$ and 
$\val \stateS[\labelL] \lessThan[\quantaleQ] \val \stateS[\epsilon]$, there is 
$\stateS[\delta] \greatThanStrict[\automatonV] \mbot[\automatonV]$ such that
$\hat{\pref_{\labelL}}(\procP,\stateS[\delta]) \lessThan[\automatonV] \stateS[\epsilon]$.
\end{proposition}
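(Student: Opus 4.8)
The plan is to obtain this as a direct corollary of the auxiliary \cref{lem:aux:prefix}, whose two hypotheses are exactly what the present statement supplies. Concretely, I would take as witness the meet of the $\labelL$-reducts itself, i.e.\ $\stateS[\delta] = \stateS[\labelL] = \glb[\automatonV] \setcomp{\stateS}{\stateS[\epsilon] \trans[\labelL] \stateS}$. With this choice the required $\stateS[\delta] \greatThanStrict[\automatonV] \mbot[\automatonV]$ is literally the hypothesis $\stateS[\labelL] \greatThanStrict \mbot$, while the precondition of \cref{lem:aux:prefix}, namely $\val \stateS[\delta] \lessThan[\quantaleQ] \val \stateS[\epsilon]$, coincides with the remaining hypothesis $\val \stateS[\labelL] \lessThan[\quantaleQ] \val \stateS[\epsilon]$. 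Since $\stateS[\delta]$ is exactly the meet that appears in the lemma, the lemma applies verbatim and yields $\hat{\pref_{\labelL}}(\procP,\stateS[\delta]) \lessThan[\automatonV] \stateS[\epsilon]$ for all $\procP$, which is the conclusion.

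Because the proposition is then one line, it is worth recording where the genuine content lives, namely inside \cref{lem:aux:prefix}. Unfolding the definition of $\hat{\pref_{\labelL}}$ and using that $\lub[\automatonV]$ is a least upper bound (\cref{lem:mlts-properties}), the target inequality reduces to showing that $\stateS[\epsilon]$ upper-bounds every $\metric(\labelL \pref \procP, \labelL \pref \procQ)$ with $\metric(\procP,\procQ) \lessThan[\automatonV] \stateS[\delta]$. By \cref{lem:metric-bisim} this is equivalent to the implication $\procP \sim_{\stateS[\delta]} \procQ \implies \labelL \pref \procP \sim_{\stateS[\epsilon]} \labelL \pref \procQ$, verified by checking the clauses of a parametrized bisimulation: the immediate-distance clause follows from \cref{eq:comp-imm} together with $\val \stateS[\delta] \lessThan[\quantaleQ] \val \stateS[\epsilon]$, and the transition clause is vacuous for $\labelLi \neq \labelL$ and, for $\labelLi = \labelL$, relies on the fact that the only $\labelL$-moves are $\labelL \pref \procP \trans[\labelL] \procP$ and $\labelL \pref \procQ \trans[\labelL] \procQ$, on $\stateS[\delta]$ lying below every $\labelL$-reduct of $\stateS[\epsilon]$ (being their meet), and on monotonicity of parametrized bisimilarity in its index.

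I do not expect a real obstacle at the level of the proposition: the whole difficulty has been absorbed into the choice of witness and into \cref{lem:aux:prefix}. The one delicate point, which motivates the extra hypotheses, is that plain uniform continuity fails for prefixing because $\stateS[\epsilon] \trans[\labelL] \mbot[\automatonV]$ would force $\stateS[\delta] = \mbot[\automatonV]$; the assumptions $\stateS[\labelL] \greatThanStrict \mbot$ and $\val \stateS[\labelL] \lessThan[\quantaleQ] \val \stateS[\epsilon]$ are precisely what excludes this degenerate case while certifying that the meet of the $\labelL$-reducts is a legitimate non-bottom witness whose immediate value is dominated by that of $\stateS[\epsilon]$.
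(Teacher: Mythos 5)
Your proposal is correct and follows the paper's proof exactly: the paper also sets $\stateS[\delta] = \stateS[\labelL]$ and invokes \cref{lem:aux:prefix}, whose internal argument (parametrized bisimulation via \cref{lem:metric-bisim}, immediate clause from \cref{eq:comp-imm}, vacuity for $\labelLi \neq \labelL$, and the meet lying below every $\labelL$-reduct) you reproduce faithfully.
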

\longversion{
\begin{proof}
Set $\stateS[\delta] = \stateS[\labelL]$. The thesis follows by \cref{lem:aux:prefix}.
\end{proof}
}

\subsection{Non-deterministic Sum}
We assume that $\proc$  is closed under binary operator $\esum$, whose semantics is again standard.
\longversion{
\[
\irule{\procP \trans[\labelL]\procPi }
{\procP \esum \procQ \trans[\labelL]\procPi}{}
\qquad
\irule{\procQ \trans[\labelL] \procQi}
{\procP \esum \procQ \trans[\labelL] \procQi}{}
\]}

\begin{example}\label{ex:sum}
Let $\procP[0], \procQ[0]$ and $\stateS[0]$ be as in \Cref{ex:restriction}, and $\procR[0]$ as
in the picture below. We have that 
$\metric(\procP[0] \esum \procR[0],\procQ[0] \esum \procR[0]) = \stateS[0]$: it reduces to $\mbot$
after a $\labelB$ move (both processes indeed terminate after a $\labelB$ action).
An $\labelA$ action instead leads to a state that can only perform a $\labelA$ action
towards $\mbot$. This is because $\procQ[0] \esum \procR[0]$ can reduce to $\procQ[1]$
with a $\labelA$ move, while $\procP[0] \esum \procR[0]$ cannot match that action exactly:
it can reduce to $\procP[1]$ or $\procR[1]$, that are not bisimilar to $\procQ[1]$.
\begin{center}
	\includegraphics{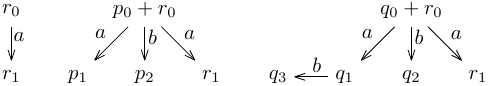}
	\commentout{
\begin{tikzpicture}[node distance = 2cm, on grid,auto]
\node (r0) [state][label =above:{$\procR[0]$}] {};
\node (r1) [state, below = of r0][label =below:{$\procR[1]$}] {};
\node (q0) [state,right =3cm of r0][label =above:{$\procP[0] \esum \procR[0]$}] {};
\node (q1) [state, below left = of q0][label =above:{$\procP[1]$}] {};
\node (q2) [state, below = of q0][label =below:{$\procP[2]$}] {};
\node (q3) [state, below right = of q0][label =below:{$\procR[1]$}] {};
\node (p0) [state,right =4.9cm of q0][label =above:{$\procQ[0]\esum \procR[0]$}] {};
\node (p1) [state, left = of p0][label =above:{$\procQ[1]$}] {};
\node (p2) [state, below = of p0][label =below:{$\procQ[2]$}] {};
\node (p3) [state, below = of p1][label =below:{$\procQ[3]$}] {};
\node (p4) [state, below right= of p0][label =below:{$\procR[1]$}] {};
\path [->]
(r0) edge [] node {$\labelA$}   (r1)
(q0) edge [] node {$\labelA$}   (q1)
(q0) edge [] node {$\labelB$}   (q2)
(q0) edge [] node {$\labelA$}   (q3)
(p0) edge [] node {$\labelA$}   (p1)
(p0) edge [] node {$\labelB$}   (p2)
(p0) edge [] node {$\labelA$}   (p4)
(p1) edge [] node {$\labelB$}   (p3)
;
\end{tikzpicture}}
\end{center}
\qed
\end{example}

Intuitively, the non-deterministic sum of two precesses can behave as the former 
process or as the latter (but not as both). Therefore we can expect that the distance
between two sums is bounded by the join of the distances of the components. This
is however not always the case, as the immediate distance is not necessarily
non-expansive. The sum operator $\mplus[\automatonV]$ from 
\cref{def:MLTS}, instead, turns out to be sufficient for our purposes.
Proposition below indeed shows that $\hat{\esum}$ is non-extensive.
\begin{proposition}
For every $\procP[1],\procP[2],\stateS[1],\stateS[2]:$ it holds that 
$\hat{\esum}(\procP[1],\procP[2],\stateS[1],\stateS[2]) \lessThan[\automatonV]
\stateS[1] \mplus[\automatonV] \stateS[2]$.
\end{proposition}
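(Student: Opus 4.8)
The plan is to reduce the statement to a fact about parametrized bisimilarity and then exhibit a single parametrized bisimulation. Since $\hat{\esum}(\procP[1],\procP[2],\stateS[1],\stateS[2])$ is by definition a join, the least-upper-bound property of $\lub[\automatonV]$ established in \cref{lem:mlts-properties} reduces the goal to showing that $\stateS[1]\mplus[\automatonV]\stateS[2]$ is an upper bound of the set $\setcomp{\metric(\procP[1]\esum\procP[2],\procPi[1]\esum\procPi[2])}{\metric(\procP[i],\procPi[i])\lessThan[\automatonV]\stateS[i],\ i=1,2}$. By \cref{lem:metric-bisim}, each premise $\metric(\procP[i],\procPi[i])\lessThan[\automatonV]\stateS[i]$ is the same as $\procP[i]\sim_{\stateS[i]}\procPi[i]$, and the required bound $\metric(\procP[1]\esum\procP[2],\procPi[1]\esum\procPi[2])\lessThan[\automatonV]\stateS[1]\mplus[\automatonV]\stateS[2]$ is the same as $\procP[1]\esum\procP[2]\sim_{\stateS[1]\mplus[\automatonV]\stateS[2]}\procPi[1]\esum\procPi[2]$. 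Hence the whole proposition follows once I prove: if $\procP[1]\sim_{\stateS[1]}\procPi[1]$ and $\procP[2]\sim_{\stateS[2]}\procPi[2]$, then $\procP[1]\esum\procP[2]\sim_{\stateS[1]\mplus[\automatonV]\stateS[2]}\procPi[1]\esum\procPi[2]$.

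To prove this, I would take the $\states$-indexed family
\[
\relR_{\stateS}\ =\ {\sim_{\stateS}}\ \cup\ \setcomp{(\procP[1]\esum\procP[2],\,\procPi[1]\esum\procPi[2])}{\stateS=\stateS[1]\mplus[\automatonV]\stateS[2],\ \procP[1]\sim_{\stateS[1]}\procPi[1],\ \procP[2]\sim_{\stateS[2]}\procPi[2]}
\]
(with $\stateS[1],\stateS[2]$ existentially quantified) and show it is a parametrized bisimulation (\cref{def:parameterised-bisimulation}); since $\sim$ is the largest one, this yields $\relR_{\stateS}\subseteq{\sim_{\stateS}}$ and in particular the claim. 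On the $\sim_{\stateS}$ disjunct nothing is to be checked. For a sum pair at $\stateS=\stateS[1]\mplus[\automatonV]\stateS[2]$, the immediate-distance condition $\immMetric(\procP[1]\esum\procP[2],\procPi[1]\esum\procPi[2])\lessThan[\quantaleQ]\val\stateS[1]\mplus[\quantaleQ]\val\stateS[2]$ follows from the compositionality hypothesis \cref{eq:comp-imm} on $\esum$, from $\immMetric(\procP[i],\procPi[i])\lessThan[\quantaleQ]\val\stateS[i]$ (a consequence of $\procP[i]\sim_{\stateS[i]}\procPi[i]$), and monotonicity of $\mplus[\quantaleQ]$. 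For the transition condition, suppose $\stateS[1]\mplus[\automatonV]\stateS[2]\trans[\labelL]\stateSi$ and $\procP[1]\esum\procP[2]\trans[\labelL]\procR$. The $\mplus[\automatonV]$ axiom of \cref{def:MLTS} gives $\stateSi=\stateSi[1]\mplus[\automatonV]\stateSi[2]$ with $\stateS[1]\trans[\labelL]\stateSi[1]$ and $\stateS[2]\trans[\labelL]\stateSi[2]$, while the semantics of $\esum$ gives $\procP[1]\trans[\labelL]\procR$ or $\procP[2]\trans[\labelL]\procR$; in the first case $\procP[1]\sim_{\stateS[1]}\procPi[1]$ with $\stateS[1]\trans[\labelL]\stateSi[1]$ yields $\procPi[1]\trans[\labelL]\procQi$ with $\procR\sim_{\stateSi[1]}\procQi$, whence $\procPi[1]\esum\procPi[2]\trans[\labelL]\procQi$.

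The hard part will be placing the continuation pair $(\procR,\procQi)$ back inside $\relR_{\stateSi}$: these are arbitrary processes, not sums, so they must land in the $\sim$ disjunct, and they are related only at the single summand state $\stateSi[1]$ rather than at the prescribed index $\stateSi=\stateSi[1]\mplus[\automatonV]\stateSi[2]$. I would close this gap with two auxiliary facts isolated beforehand: (i) monotonicity of parametrized bisimilarity in its index (the earlier lemma: $\stateSi[1]\lessThan[\automatonV]\stateSi$ and $\procR\sim_{\stateSi[1]}\procQi$ imply $\procR\sim_{\stateSi}\procQi$), and (ii) the inequality $\stateS\lessThan[\automatonV]\stateS\mplus[\automatonV]\stateSi$, which I would derive from items 3, 5 and 6 of \cref{lem:mlts-properties} via $\stateS\mequiv[\automatonV]\stateS\mplus[\automatonV]\mbot[\automatonV]\mequiv[\automatonV]\glb[\automatonV]\setcomp{\stateS\mplus[\automatonV]\stateSii}{\stateSii\in\{\mbot[\automatonV],\stateSi\}}\lessThan[\automatonV]\stateS\mplus[\automatonV]\stateSi$, or directly by exhibiting the obvious simulation. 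The symmetric case (a move from $\procP[2]$, using commutativity of $\mplus[\automatonV]$) and the matching of moves of $\procPi[1]\esum\procPi[2]$ (using symmetry of $\sim$) are handled identically. Including $\sim$ as a disjunct of $\relR_{\stateS}$ is exactly what makes the game on sum pairs descend into genuine $\sim$ pairs after one step, so the relation closes off cleanly.
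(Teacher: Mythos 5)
Your proposal is correct and follows essentially the same route as the paper's own proof: reduce, via the least-upper-bound property of $\lub[\automatonV]$ and \cref{lem:metric-bisim}, to showing $\procP[1]\esum\procP[2]\sim_{\stateS[1]\mplus[\automatonV]\stateS[2]}\procQ[1]\esum\procQ[2]$, then decompose the transitions of $\stateS[1]\mplus[\automatonV]\stateS[2]$, do the case analysis on the sum semantics, and close the continuation pairs using index-monotonicity of $\sim$ together with $\stateSi[1]\lessThan[\automatonV]\stateSi[1]\mplus[\automatonV]\stateSi[2]$. You are merely more explicit than the paper on two points it leaves implicit --- the exact parametrized bisimulation being exhibited (with $\sim$ as a disjunct) and the derivation of $\stateS\lessThan[\automatonV]\stateS\mplus[\automatonV]\stateSi$ from the quantale axioms --- which is a harmless refinement, not a different argument.
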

\longversion{
\begin{proof}
We start by showing that: \[\metric(\procP[1] \esum \procP[2],
\procQ[1] \esum \procQ[2]) \lessThan[\automatonV] \stateS[1] \mplus \stateS[2]\] 
Where:
\[
\metric(\procP[1],\procQ[1]) \lessThan[\automatonV] \stateS[1] \qquad
\metric(\procP[2],\procQ[2]) \lessThan[\automatonV] \stateS[2] 
\]
We rely on \cref{lem:metric-bisim}, and we prove:
\[
\procP[1] \esum \procP[2] \sim_{\stateS[1] \mplus[\automatonV] \stateS[2]} 
\procQ[1] \esum \procQ[2]
\]
Condition $\immMetric(\procP[1] \esum \procP[2],\procQ[1] \esum \procQ[2]
\lessThan[\quantaleQ] \val \stateS[1] \mplus[\automatonV] \stateS[2]$ follows directly from
thr assumption \cref{eq:comp-imm}. So, suppose $\stateS[1] \mplus[\automatonV] \stateS[2]
\trans[\labelL] \stateS$. It must be $\stateS = \stateSi[1] \mplus \stateSi[2]$
for some $\stateSi[1],\stateSi[2]$ such that $\stateS[1] \trans[\labelL] \stateSi[1]$ 
and $\stateS[2] \trans[\labelL] \stateSi[2]$. Suppose
$\procP[1] \esum \procP[2] \trans[\labelL] \procP$. 
By inversion on the operational semantics, we have that
$\procP[1] \trans[\labelL] \procP$ or $\procP[2] \trans[\labelL] \procP$. We show
only the former case. Since $\stateS[1] \trans[\labelL] \stateSi[1]$, we have
that $\procQ[1] \trans[\labelL] \procQi[1]$ for some $\procQi[1]$ such that
$\procPi[1] \sim_{\stateSi[1]} \procQi[1]$. Since $\stateSi[1] \lessThan[\automatonV] 
\stateSi[1] \mplus[\automatonV] \stateSi[2]$, we have that 
$\procPi[1] \sim_{\stateSi[1] \mplus \stateSi[2]} \procQi[1]$, as required.
The case for $\procQ[1] \esum \procQ[2]$ moves is similar.
\end{proof}
}

\subsection{Parallel Composition}\label{compositionality:paral}
We assume $\proc$ to be closed under the binary operator $\paral$, whose semantics is
defined below:
\[
\irule{\procP \trans[\labelL]\procPi }
{\procP \paral \procQ \trans[\labelL]\procPi \paral \procQ}{}
\qquad
\irule{\procP \trans[{\labelL}]\procPi \quad \procQ \trans[{\labelL}] \procQi}
{\procP \paral \procQ \trans[\labelL]\procPi \paral \procQi}{}
\qquad
\irule{\procQ \trans[\labelL] \procQi}
{\procP \paral \procQ \trans[\labelL]\procP \paral \procQi}{}
\]
The notion of synchronisation considered in this paper is the one pioneered in CSP \cite{Hoare78,Glabbeek97}.
This choice is motivated by the fact that, in comparison with CCS-like communication \cite{Milner80} 
(which requires dual actions to synchronise resulting in an invisible $\tau$-action), CSP
notion does not change the label: this simplifies the technical development and enables 
stronger compositionality properties. Most of the works on compositionality of metrics for parallel composition
we are aware of use CSP synchronisation, e.g. \cite{BacciBLM13,GeblerLT16,GeblerLT15}.

\begin{example}\label{ex:paral}
Let $\procP[0]$ and $\procQ[0]$ be as in \Cref{ex:restriction}, and $\procR[0]$ as in
\Cref{ex:sum}. We have that $\metric(\procP[0] \paral \procR[0],\procQ[0] \paral \procR[0])$
is as the figure below. Indeed, $\procP[0] \paral \procR[0]$ and $\procQ[0] \paral \procR[0]$
necessarily reduce to bisimilar states after a $\labelB$ action: therefore their distance
$\labelB$-reduces to $\mbot$. The situation for $\labelA$ actions is more involved, due
the the presence of several $\labelA$-reducts for both processes. So, consider the transition
$\procP[0] \paral \procR[0] \trans[\labelA] \procP[1] \paral \procR[0]$. We need to find
the matching move of $\procQ[0]\paral\procR[0]$ that minimises the distance between the 
reducts. So, consider the transition $\procQ[0] \paral \procR[0] \trans[\labelA] \procQ[1] 
\paral \procR[1]$. Since $\procP[1] \paral \procR[0]$ can only perform $\labelA$ actions
while $\procQ[1] \paral \procR[1]$ only $\labelB$ ones, we have that 
$\metric(\procP[1] \paral \procR[0],\procQ[1] \paral \procR[1]) = \mtop$. If we instead
consider transition $\procQ[0] \paral \procR[0] \trans[\labelA] \procQ[1] \paral \procR[0]$,
we have that $\metric(\procP[1] \paral \procR[0],\procQ[1] \paral \procR[0]) = \stateSi[1]$.
Indeed, $\procQ[1] \paral \procR[0] \trans[\labelB]$ while $\procP[1] \paral \procR[0]$
does not: hence $\stateSi[1] \not\trans[\labelB]$. Moreover, $\stateSi[1] \trans[\labelA]
\stateSi[2]$. The only $\labelA$-reducts of 
$\procP[1] \paral \procR[0]$ and $\procQ[1] \paral \procR[0]$ are, respectively,
$\procP[1] \paral \procR[1]$ and $\procQ[1] \paral \procR[1]$.
It is easy to verify that $\metric(\procP[1] \paral \procR[1],\procQ[1] \paral \procR[1])
= \stateSi[2]$. The last possible matching choice is 
$\procQ[0] \paral \procR[0] \trans[\labelA] \procQ[0] \paral \procR[1]$, for which
we have that $\metric(\procP[1] \paral \procR[0],\procQ[0] \paral \procR[1]) = \stateSi[1]$:
the argument is similar to the previous case.
All the other starting $\labelA$-moves of $\procP[0] \paral \procR[0]$, and those of 
$\procQ[0] \paral \procR[0]$, have matching moves leading to distances greater or equal
than $\stateSi[1]$.
\begin{center}
	\includegraphics{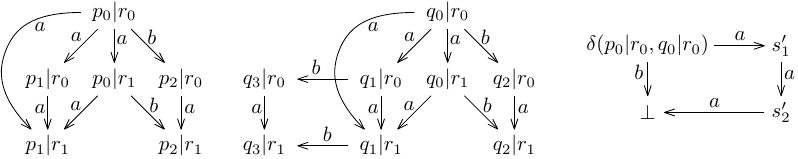}
	\commentout{
\begin{tikzpicture}[node distance = 2cm, on grid,auto]
\node (q00) [state][label =above:{$\procP[0] \paral \procR[0]$}] {};
\node (q10) [state, below left = of q00][label =above:{$\procP[1] \paral \procR[0]$}] {};
\node (q01) [state, below = of q00][label =below:{$\procP[0] \paral \procR[1]$}] {};
\node (q20) [state, below right = of q00][label =right:{$\procP[2] \paral \procR[0]$}] {};
\node (q11) [state, below = of q10][label =below:{$\procP[1] \paral \procR[1]$}] {};
\node (q21) [state, below = of q20][label =below:{$\procP[2] \paral \procR[1]$}] {};
\node (p00) [state, right =7cm of q00][label =above:{$\procQ[0] \paral \procR[0]$}] {};
\node (p10) [state, below left = of p00][label =above:{$\procQ[1] \paral \procR[0]$}] {};
\node (p01) [state, below = of p00][label =below:{$\procQ[0] \paral \procR[1]$}] {};
\node (p20) [state, below right = of p00][label =right:{$\procQ[2] \paral \procR[0]$}] {};
\node (p11) [state, below = of p10][label =below:{$\procQ[1] \paral \procR[1]$}] {};
\node (p21) [state, below = of p20][label =right:{$\procQ[2] \paral \procR[1]$}] {};
\node (p03) [state, left = of p10][label =above:{$\procQ[3] \paral \procR[0]$}] {};
\node (p13) [state, below = of p03][label =below:{$\procQ[3] \paral \procR[1]$}] {};
\path [->]
(q00) edge [] node {$\labelA$}   (q10)
(q00) edge [] node {$\labelB$}   (q20)
(q00) edge [] node {$\labelA$}   (q01)
(q00) edge [] node {$\labelA$}   (q11)
(q10) edge [] node {$\labelA$}   (q11)
(q01) edge [] node {$\labelA$}   (q11)
(q01) edge [] node {$\labelB$}   (q21)
(q20) edge [] node {$\labelA$}   (q21)
(p00) edge [] node {$\labelA$}   (p10)
(p00) edge [] node {$\labelB$}   (p20)
(p00) edge [] node {$\labelA$}   (p01)
(p00) edge [] node {$\labelA$}   (p11)
(p10) edge [] node {$\labelA$}   (p11)
(p01) edge [] node {$\labelA$}   (p11)
(p01) edge [] node {$\labelB$}   (p21)
(p20) edge [] node {$\labelA$}   (p21)
(p10) edge [] node {$\labelB$}   (p03)
(p03) edge [] node {$\labelA$}   (p13)
(p11) edge [] node {$\labelB$}   (p13)
;
\end{tikzpicture}
\\[5pt]

\begin{tikzpicture}[node distance = 2cm, on grid,auto]
\node (s0) [state][label =left:{$\metric(\procP[0] \paral \procR[0],\procQ[0] \paral \procR[0])$}] {};
\node (s1) [state, right  = of s0][label =below:{$\stateSi[1]$}] {};
\node (s2) [state, right = of s1][label =below:{$\stateSi[2]$}] {};
\node (s3) [state, right = of s2][label =right:{$\mbot$}] {};
\path [->]
(s0) edge [] node {$\labelA$}   (s1)
(s0) edge [bend left] node {$\labelB$}   (s3)
(s1) edge [] node {$\labelA$}   (s2)
(s2) edge [] node {$\labelA$}   (s3)
;
\end{tikzpicture}}
\end{center}
\qed
\end{example}

Parallel composition does not enjoy strong compositionality properties. Indeed in general
$\hat{\paral}(\procP[1],\procP[2],\stateS[1],\stateS[2])$ is  related neither to $\stateS[1]$ nor to
$\stateS[2]$, and even $\hat{\paral}(\procP[1],\procP[2],\stateS[1],\mbot[\automatonV])$ is not
related to $\stateS[1]$. Consider for instance the case where $\procP[2]$ ``consumes'' a $\stateS[1]$
move.



However, our metric domain $\automatonV$ contains ``contextual'' information.
We exploit this fact to show that a nice compositionality property, similar to
non-extensivity \cite{GeblerLT15}, holds when the context and the distance are ``compatible''.
A formal definition of compatibility follows. 

\begin{definition}\label{def:comp-relation}
A relation $\relR \subseteq \states \times \proc$ is a compatibility relation if,
whenever $\stateS\; \relR\; \procP$:
\begin{enumerate}
\item
$\stateS \trans[\labelL] \stateSi \implies  \stateSi\; \relR\; \procP$;
\item
$\stateS \trans[\labelL] \stateSi \;\text{and}\;\procP \trans[\labelL] \procPi \implies
\stateSi\; \relR\; \procPi \;\text{and}\; \stateS \lessThan[\automatonV] \stateSi$.
\end{enumerate}
We say that $\stateS$ is $\procP$-compatible iff $\stateS\; \relR\; \procP$ for some
compatibility relation $\relR$.
\end{definition}

\begin{example}
Consider again $\procP[0]$, $\stateS[0]$, $\stateS[1]$ from \cref{ex:restriction}
and $\metric(\procP[0] \paral \procR[0],\procQ[0] \paral \procR[0])$ of \cref{ex:paral}. 
We have that 
$\stateS[0]$ is not $\procP[0]$-compatible as Condition 2 from \cref{def:comp-relation} is 
violated: $\stateS[0] \trans[\labelA] \stateS[1]$ and $\procP[0] \trans[\labelA] \procP[1]$ 
but $\stateS[0] \not\lessThan[\automatonV_0] \stateS[1]$. Instead, $\stateSii[0]$ below
is $\procP[0]$-compatible: it follows from the facts that $\stateSii[0]$ 
necessarily reduces to a greater or equal state, $\procP[0]$ reduces to terminated states, 
which are vacuously compatible with every distance. Note that 
$\metric(\procP[0],\procQ[0]) = \stateS[0] \lessThan[\automatonV_0] \stateSii[0]$ and 
$\metric(\procP[0] \paral \procR[0],\procQ[0] \paral \procR[0])
\lessThan[\automatonV_0] \stateSii[0]$. The second inclusion follows
from the first by \cref{lem:composition:paral}.
\begin{center}
\includegraphics{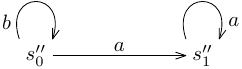}
\commentout{
\begin{tikzpicture}[node distance = 2cm, on grid,auto]
\node (s0) [state][label =left:{$\stateSii[0]$}] {};
\node (s1) [state, right  = of s1][label =right:{$\stateSii[1]$}] {};
\path [->]
(s0) edge [bend left] node {$\labelA$}   (s1)
(s0) edge [loop above] node {$\labelB$}   (s0)
(s1) edge [loop above] node {$\labelA$}   (s1)
;
\end{tikzpicture}}
\end{center}
\end{example}

%

\begin{proposition}\label{lem:composition:paral}
If $\stateS[1]$ is $\procP[2]$-compatible and 
$\stateS[2]$ is $\procP[1]$-compatible, then
$\hat{\paral}(\procP[1],\procP[2],\stateS[1],\stateS[2]) \lessThan[\automatonV] \stateS[1] 
\mplus[\automatonV] \stateS[2]$.
\end{proposition}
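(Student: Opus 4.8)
The plan is to reduce the statement to a claim about parametrized bisimilarity via \cref{lem:metric-bisim}, and then to build an explicit parametrized bisimulation whose indices are sums $\stateS[1]\mplus[\automatonV]\stateS[2]$. By definition, $\hat{\paral}(\procP[1],\procP[2],\stateS[1],\stateS[2])$ is the join $\lub[\automatonV]$ of the distances $\metric(\procP[1]\paral\procP[2],\procQ[1]\paral\procQ[2])$ taken over all $\procQ[1],\procQ[2]$ with $\metric(\procP[i],\procQ[i])\lessThan[\automatonV]\stateS[i]$; since by \cref{lem:mlts-properties} this join is the \emph{least} upper bound, it suffices to prove that $\stateS[1]\mplus[\automatonV]\stateS[2]$ is an upper bound of that set. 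Fixing such $\procQ[1],\procQ[2]$ and using the equivalence $\metric(\procP[i],\procQ[i])\lessThan[\automatonV]\stateS[i]\iff\procP[i]\sim_{\stateS[i]}\procQ[i]$ from \cref{lem:metric-bisim}, the goal becomes, again via \cref{lem:metric-bisim}, to establish $\procP[1]\paral\procP[2]\sim_{\stateS[1]\mplus[\automatonV]\stateS[2]}\procQ[1]\paral\procQ[2]$ out of $\procP[1]\sim_{\stateS[1]}\procQ[1]$ and $\procP[2]\sim_{\stateS[2]}\procQ[2]$.

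To this end I would define the $\states$-indexed family $\relR$ by letting $\procP[1]\paral\procP[2]\;\relR_{\stateS}\;\procQ[1]\paral\procQ[2]$ iff $\stateS=\stateS[1]\mplus[\automatonV]\stateS[2]$ for some $\stateS[1],\stateS[2]$ with $\procP[1]\sim_{\stateS[1]}\procQ[1]$, $\procP[2]\sim_{\stateS[2]}\procQ[2]$, with $\stateS[1]$ being $\procP[2]$-compatible and $\stateS[2]$ being $\procP[1]$-compatible, and then show $\relR$ is a parametrized bisimulation. The immediate-distance clause is routine: the compositionality assumption \cref{eq:comp-imm} gives $\immMetric(\procP[1]\paral\procP[2],\procQ[1]\paral\procQ[2])\lessThan[\quantaleQ]\immMetric(\procP[1],\procQ[1])\mplus[\quantaleQ]\immMetric(\procP[2],\procQ[2])$, and since $\procP[i]\sim_{\stateS[i]}\procQ[i]$ yields $\immMetric(\procP[i],\procQ[i])\lessThan[\quantaleQ]\val\stateS[i]$, monotonicity of $\mplus[\quantaleQ]$ bounds this by $\val\stateS[1]\mplus[\quantaleQ]\val\stateS[2]=\val(\stateS[1]\mplus[\automatonV]\stateS[2])$.

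The heart of the argument is the transition clause. Whenever $\stateS[1]\mplus[\automatonV]\stateS[2]\trans[\labelL]\stateSi$, the $\mplus[\automatonV]$-axiom of \cref{def:MLTS} forces $\stateSi=\stateSi[1]\mplus[\automatonV]\stateSi[2]$ with $\stateS[1]\trans[\labelL]\stateSi[1]$ and $\stateS[2]\trans[\labelL]\stateSi[2]$, so \emph{both} component indices advance regardless of which processes actually move. I then invert $\procP[1]\paral\procP[2]\trans[\labelL]R$ into the three CSP rules (only $\procP[1]$ moves, only $\procP[2]$ moves, or both synchronise). For each moving component I match its step against the reduct $\stateSi[i]$ using $\procP[i]\sim_{\stateS[i]}\procQ[i]$, obtaining $\procQ[i]\trans[\labelL]\procQi[i]$ with $\procPi[i]\sim_{\stateSi[i]}\procQi[i]$, and reassemble a matching move of $\procQ[1]\paral\procQ[2]$ with the corresponding rule. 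It then remains to re-establish the four side conditions at index $\stateSi[1]\mplus[\automatonV]\stateSi[2]$, and this is exactly what the two clauses of \cref{def:comp-relation} provide: for each component that moves, clause 2 simultaneously refreshes the relevant compatibilities (e.g. yielding that $\stateSi[1]$ is $\procPi[2]$-compatible) and produces the key inequalities $\stateS[j]\lessThan[\automatonV]\stateSi[j]$, which by monotonicity of parametrized bisimilarity lift $\procP[j]\sim_{\stateS[j]}\procQ[j]$ to $\procP[j]\sim_{\stateSi[j]}\procQ[j]$ whenever component $j$ stays idle; for each component that stays idle, clause 1 propagates its own compatibility along $\stateS[i]\trans[\labelL]\stateSi[i]$. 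Matching of $\procQ[1]\paral\procQ[2]$-moves is symmetric, using the dual clause of parametrized bisimilarity together with the same two compatibility clauses.

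I expect the main obstacle to be the idle-component bookkeeping. When only $\procP[1]$ fires, the passive index $\stateS[2]$ is nonetheless forced to advance to $\stateSi[2]$, yet $\procP[2]$ performs no step against which to match it; the only way to retain $\procP[2]\sim_{\stateSi[2]}\procQ[2]$ is to know $\stateS[2]\lessThan[\automatonV]\stateSi[2]$, and this monotone growth of the passive index is guaranteed solely by clause 2 of \cref{def:comp-relation} applied to the \emph{active} component $\procP[1]$ (recalling that $\stateS[2]$ is $\procP[1]$-compatible). Recognising that this is the precise purpose of compatibility---with clause 1 simultaneously preserving the passive side's own compatibility---is the conceptual crux; once it is isolated, all three inversion cases and their $\procQ$-duals close uniformly.
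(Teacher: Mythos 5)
Your proposal is correct and follows essentially the same route as the paper: it exhibits the same parametrized bisimulation indexed by sums $\stateS[1]\mplus[\automatonV]\stateS[2]$, handles the immediate-distance clause via \cref{eq:comp-imm}, and closes the transition clause by the same case analysis on the three CSP rules, using clause 1 of \cref{def:comp-relation} to propagate the idle component's compatibility and clause 2 on the active component to obtain both the refreshed compatibility and the inequality $\stateS[j]\lessThan[\automatonV]\stateSi[j]$ needed (via monotonicity of $\sim_{\stateS}$) for the idle component. Your identification of the idle-component bookkeeping as the crux, and of clause 2 as its exact remedy, is precisely the content of the paper's argument.
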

\longversion{
\begin{proof}
It suffice to prove that $\relR_{\stateS[1] \mplus[\automatonV] \stateS[2]} = 
\{(\procP[1] \paral \procP[2],\procQ[1] \paral \procQ[2]) \mid
\metric(\procP[1],\procQ[1]) \lessThan[\automatonV] \stateS[1],\metric(\procP[2],\procQ[2])
\lessThan \stateS[2], \stateS[1]$ is $\procP[2]$-compatible and
$\stateS[2]$ is $\procP[1]$-compatible$\}$ is a parametrized bisimulation.
So, let $\procP[1] \paral \procP[2] \relR_{\stateS[1]\mplus\stateS[2]} 
\procQ[1] \paral \procQ[2]$. Condition 
$\immMetric(\procP[1] \paral \procP[2],\procQ[1] \paral \procQ[2]) \lessThan[\quantaleQ] \val
\stateS[1] \mplus[\automatonV] \stateS[2]$ follows immediately by \cref{eq:comp-imm}.
For condition 2, suppose $\stateS[1] \mplus[\automatonV] \stateS[2] \trans[\labelL] \stateS$. By 
inversion, it must be $\stateS = \stateSi[1] \mplus[\automatonV] \stateSi[2]$ for some $\stateSi[1],
\stateSi[2]$ such that $\stateS[1] \trans[\labelL] \stateSi[1]$ and
$\stateS[2] \trans[\labelL] \stateSi[2]$. So, suppose $\procP[1] \paral \procP[2] 
\trans[\labelL]$. We proceed by cases on the rule used:
\begin{itemize}
\item
\[
\irule{\procP[1] \trans[\labelL]\procPi[1]}
{\procP[1] \paral \procP[2] \trans[\labelL] \procPi[1] \paral \procP[2]}{}
\]
Then, $\procQ[1] \trans[\labelL]\procQi[1]$ for some $\procQi[1]$ such that
$\metric(\procPi[1],\procQi[1]) \lessThan[\automatonV] \stateSi[1]$. Then:
\[
\irule{\procQ[1] \trans[\labelL]\procQi[1]}
{\procQ[1] \paral \procQ[2] \trans[\labelL] \procQi[1] \paral \procQ[2]}{}
\]
By item 1 of \cref{def:comp-relation}, we have that $\stateSi[1]$ is  
$\procP[2]$-compatible. By item 2 of \cref{def:comp-relation}, 
we have that $\stateSi[2]$ is $\procPi[1]$-compatible and 
$\stateS[2] \lessThan \stateSi[2]$. Hence $\metric(\procP[2],\procQ[2])
\lessThan[\automatonV] \stateS[2]$. Therefore:
\[
\procPi[1] \paral \procP[2]\;\relR_{\stateSi[1] \mplus[\automatonV] \stateSi[2]}\;
\procQi[1] \paral \procQ[2]
\]
\item
\[
\irule{\procP[1] \trans[\labelL]\procPi[1] \quad \procP[2] \trans[\labelL]\procPi[2]}
{\procP[1] \paral \procP[2] \trans[\labelL] \procPi[1] \paral \procPi[2]}{}
\]
Then, $\procQ[1] \trans[\labelL]\procQi[1]$ and $\procQ[2] \trans[\labelL]\procQi[2]$ 
for some $\procQi[1],\procQi[2]$ such that $\metric(\procPi[1],\procQi[1]) \lessThan[\automatonV] 
\stateSi[1]$ and $\metric(\procPi[2],\procQi[2]) \lessThan[\automatonV] \stateSi[2]$. Then:
\[
\irule{\procQ[1] \trans[\labelL]\procQi[1] \quad \procQ[2] \trans[\labelL]\procQi[2]}
{\procP[1] \paral \procP[2] \trans[\labelL] \procQi[1] \paral \procQi[2]}{}
\]
By item 2 of \cref{def:comp-relation}, we have that $\stateSi[1]$ is  
$\procPi[2]$-compatible and $\stateSi[2]$ is $\procPi[1]$-compatible.
Therefore:
\[
\procPi[1] \paral \procPi[2]\;\relR_{\stateSi[1] \mplus[\automatonV] \stateSi[2]}\;
\procQi[1] \paral \procQi[2]
\]
\end{itemize}
The case for the last rule is similar, as is the case for $\procQ[1]\paral\procQ[2]$
moves.
\end{proof}
}

\subsection{Replication}
We assume that $\proc$  is closed both under operator $\paral$ (as defined in 
\cref{compositionality:paral}) and under $\bang: \proc \to \proc$, 
whose semantics is standard.
\longversion{
\[
\irule{\procP \trans[\labelL]\procPi }
{\bang \procP \trans[\labelL]\procPi \paral \bang \procP}{}
\]}
In general, replication has bad compositionality properties: since it allows 
infinite behaviour,
even a small distance in the parameter can get amplified to a much larger value.
However, we show that $\hat{\bang}$ is not expansive under the assumption that
the parameter $\stateS$ always reduces to a larger or equal value and 
$\mplus[\quantaleQ]$ is idempotent. Such condition is
of course quite strong, but it holds for instance when interpreting bisimilarity as a 
contextual bisimulation metric (see \cref{ex:bisimilarity}).

\begin{example}
We have that $\bang \procP[0]$ and $\bang \procQ[0]$ can both fire a $\labelA$ or $\labelB$
action and reduce to a process with the same behaviour (the simplest state with 
this
property is drawn in the figure). Therefore, the distance 
$\metric(\bang\procP[0],\bang\procQ[0]) = \mbot[\automatonV]$. In general, however,
the distance among processes is not preserved by replication, as shown below:
\begin{center}
	\includegraphics{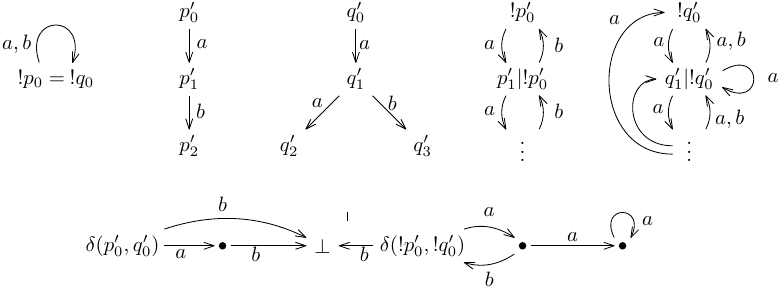}
\commentout{
\begin{tikzpicture}[node distance = 2cm, on grid,auto]
\node (p0q0) [state][label =below:{$\bang \procP[0] = \bang \procQ[0]$}] {};
\node (p0) [state, right = of p0q0][label =below:{$\procPi[0]$}] {};
\node (p1) [state, right  = of p0][label =below:{$\procPi[1]$}] {};
\node (p2) [state, right  = of p1][label =below:{$\procPi[2]$}] {};
\node (q0) [state, right  = of p2][label =below:{$\procQi[0]$}] {};
\node (q1) [state, right  = of q0][label =below:{$\procQi[1]$}] {};
\node (q2) [state, above right  = of q1][label =below:{$\procQi[2]$}] {};
\node (q3) [state, below right  = of q1][label =above:{$\procQi[3]$}] {};
\node (!p0) [state, below =2.5cm of p0q0][label =below:{$\bang\procPi[0]$}] {};
\node (!p1) [state, right  = of !p0][label =below:{$\procPi[1]\paral\bang\procPi[0]$}] {};
\node (dotsp) [right  = of !p1] {$\hdots$};
\node (!q0) [state, right = of dotsp][label =below:{$\bang\procQi[0]$}] {};
\node (!q1) [state, right  = of !q0][label =below:{$\procQi[1]\paral\bang\procQi[0]$}] {};
\node (dotsq) [right  = of !q1] {$\hdots$};
\node (s0) [state,below=2.5cm of !p0] [label =below:{$\metric(\procPi[0],\procQi[0])$}] {};{};
\node (s1) [state,right= of s0] {};
\node (sbot) [state,right= of s1] [label =below:{$\mbot$}] {};
\node (s0!) [state,right= of sbot][label =below:{$\metric(\bang\procPi[0],\bang\procQi[0])$}] {};
\node (s1!) [state, right = of s0!] {};
\node (s2!) [state, right = of s1!] {};
\path [->]
(p0q0) edge [loop above] node {$\labelA,\labelB$}   (p0q0)
(p0) edge node {$\labelA$} (p1)
(p1) edge node {$\labelB$} (p2)
(q0) edge node {$\labelA$} (q1)
(q1) edge node {$\labelB$} (q2)
(q1) edge node {$\labelA$} (q3)
(!p0) edge [bend left] node {$\labelA$} (!p1)
(!p1) edge [bend left] node {$\labelA$} (dotsp)
(!p1) edge [bend left] node {$\labelB$} (!p0)
(dotsp) edge [bend left] node {$\labelB$} (!p1)
(!q0) edge [bend left] node {$\labelA$} (!q1)
(!q1) edge [bend left] node {$\labelA$} (dotsq)
(!q1) edge [bend left] node {$\labelA,\labelB$} (!q0)
(!q1) edge [loop above] node {$\labelA$} (!q1)
(dotsq) edge  node {$\labelA,\labelB$} (!q1)
(dotsq) edge [bend left=70] node {$\labelA$} (!q0)
(dotsq) edge [bend right=70] node {$\labelA$} (!q1)
(s0) edge node {\labelA} (s1)
(s0) edge [bend left] node {\labelB} (sbot)
(s1) edge node {\labelB} (sbot)
(s0!) edge node {$\labelB$} (sbot)
(s0!) edge [bend left] node {$\labelA$} (s1!)
(s1!) edge [bend left] node {$\labelB$} (s0!)
(s1!) edge node {$\labelA$} (s2!)
(s2!) edge [loop right] node {$\labelA$} (s2!)
;
\end{tikzpicture}}
\end{center}
\qed
\end{example}

\begin{definition}
We define $\mathbf{Inc}$, the set of increasing states, as the largest set 
$\statesi \subseteq \states$ such that, whenever $\stateS \in \statesi$
and $\stateS \trans[\labelL] \stateSi: \stateS \lessThan[\automatonV] \stateSi$ and
$\stateSi \in \statesi$.
\end{definition}
 
\begin{proposition}
If $\stateS$ is increasing and $\mplus[\quantaleQ]$ is idempotent, then
$\hat{\bang}(\procP,\stateS) \lessThan[\automatonV] \stateS$.
\end{proposition}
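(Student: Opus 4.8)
The plan is to exploit that $\hat{\bang}(\procP,\stateS)$ is, by definition, the join $\lub[\automatonV]\setcomp{\metric(\bang\procP,\bang\procQ)}{\metric(\procP,\procQ)\lessThan[\automatonV]\stateS}$, so by the least-upper-bound property of $\lub[\automatonV]$ (item~4 of \cref{lem:mlts-properties}) it suffices to show that $\stateS$ is an upper bound of that set, i.e.\ that $\metric(\bang\procP,\bang\procQ)\lessThan[\automatonV]\stateS$ for every $\procQ$ with $\metric(\procP,\procQ)\lessThan[\automatonV]\stateS$. By \cref{lem:metric-bisim} this is equivalent to $\bang\procP\sim_{\stateS}\bang\procQ$. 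I would therefore fix such $\procP,\procQ$ and build a parametrized bisimulation witnessing this. Since the replication rule unfolds $\bang\procP$ into parallel products that keep spawning fresh copies, the relation must be closed under such products: for $\stateS$ increasing I would let $\relR_{\stateS}$ contain all pairs
\[
\bigl(\procR[1]\paral\cdots\paral\procR[k]\paral\bang\procP,\; \procRi[1]\paral\cdots\paral\procRi[k]\paral\bang\procQ\bigr)
\]
with $k\geq 0$, $\metric(\procP,\procQ)\lessThan[\automatonV]\stateS$ and $\metric(\procR[j],\procRi[j])\lessThan[\automatonV]\stateS$ for every $j$ (and $\relR_{\stateS}=\emptyset$ for non-increasing $\stateS$); the case $k=0$ gives the pair we care about, and the family is consistent since $\mathbf{Inc}$ is closed under reduction.

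For the immediate-distance clause I would iterate the hypothesis \cref{eq:comp-imm} along the parallel product and apply it once more to $\bang$, bounding $\immMetric$ of the two products by $\immMetric(\procR[1],\procRi[1])\mplus[\quantaleQ]\cdots\mplus[\quantaleQ]\immMetric(\procR[k],\procRi[k])\mplus[\quantaleQ]\immMetric(\procP,\procQ)$. Each summand is $\lessThan[\quantaleQ]\val\stateS$, because $\metric(\cdot,\cdot)\lessThan[\automatonV]\stateS$ forces $\immMetric(\cdot,\cdot)\lessThan[\quantaleQ]\val\metric(\cdot,\cdot)\lessThan[\quantaleQ]\val\stateS$ (using \cref{def:cbm:item1} and the fact that $\lessThan[\automatonV]$ preserves $\val$). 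Monotonicity of $\mplus[\quantaleQ]$ then bounds the whole sum by a $(k{+}1)$-fold sum of $\val\stateS$, which collapses to $\val\stateS$ exactly because $\mplus[\quantaleQ]$ is idempotent; this is the single point where the idempotency hypothesis is used.

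For the transition clause I would use that $\stateS$ is increasing: any move $\stateS\trans[\labelL]\stateSi$ satisfies $\stateS\lessThan[\automatonV]\stateSi$ with $\stateSi$ again increasing. From the three parallel rules, a $\labelL$-move of the left product is obtained by letting a non-empty set $I$ of components fire $\labelL$ synchronously while the others stay put. For each firing leaf $\procR[j]$ I invoke $\procR[j]\sim_{\stateS}\procRi[j]$ (which holds by \cref{lem:metric-bisim}) together with $\stateS\trans[\labelL]\stateSi$ to obtain a matching move of $\procRi[j]$ whose reducts are related at $\stateSi$; if the $\bang\procP$ component is among those firing, the replication rule produces a fresh leaf $\procPi$ against $\procQi$ with $\metric(\procPi,\procQi)\lessThan[\automatonV]\stateSi$ while retaining $\bang\procP$ versus $\bang\procQ$ with base still related at $\stateSi$. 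The components that do not move stay related at $\stateS\lessThan[\automatonV]\stateSi$, hence at $\stateSi$ by transitivity (item~1 of \cref{lem:mlts-properties}); assembling these into the matching move of the right product shows the two reducts lie in $\relR_{\stateSi}$. The case of a right-hand move is symmetric.

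The genuinely delicate point, and the one I expect to be the main obstacle, is that replication generates parallel products of \emph{unbounded} width, so the relation cannot be merely $\{(\bang\procP,\bang\procQ)\}$ but must be closed under arbitrary spawning; the two hypotheses are precisely what make this closure sound. The ``increasing'' condition guarantees that the index never decreases along transitions, so every freshly created or stationary leaf stays below the current index, while idempotency of $\mplus[\quantaleQ]$ guarantees that the immediate distance of an arbitrarily wide product still stays below $\val\stateS$. Checking that these two facts suffice to sustain the bisimulation game across all the unbounded parallel derivatives is where the real work lies; everything else is routine.
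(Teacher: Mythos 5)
Your proposal is correct and follows essentially the same route as the paper: both reduce the claim (via \cref{lem:metric-bisim} and the definition of $\hat{\bang}$) to exhibiting a parametrized bisimulation whose members are arbitrary-width parallel products of components pairwise related below the current index together with the replicated pair, using \cref{eq:comp-imm} plus idempotency of $\mplus[\quantaleQ]$ for the immediate-distance clause and the increasing condition (with monotonicity of parametrized bisimilarity) for the transition clause. The only cosmetic difference is that you make explicit the initial least-upper-bound step that the paper leaves implicit.
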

\longversion{
\begin{proof}
We prove that, provided $\mplus[\quantaleQ]$ is idempotent, $\relR_{\stateS}$
defined below is a parametrized bisimulation.
\[
\relR_{\stateS} = 
\setcomp{(\procP[1] \paral (\hdots \paral (\procP[n] \paral \bang \procP)),
\procQ[1] \paral (\hdots \paral (\procQ[n] \paral \bang \procQ))}{n \geq 0, \stateS \in 
\;Dec, \forall 1 \leq i \leq n: \metric(\procP[i],\procQ[i]) \lessThan \stateS,
\metric(\procP,\procQ) \lessThan \stateS}
\]
So, let $\hat{\procP} \relR_{\stateS} \hat{\procQ}$.
For condition 1 we have that 
$\immMetric(\hat{\procP},\hat{\procQ}) \lessThan[\quantaleQ]
(\Sigma_{1 \leq i \leq n} \immMetric(\procP[i],\procQ[j])) \mplus[\quantaleQ]
\immMetric(\procP,\procQ) \lessThan \val \stateS$. The first inequality follows from
\cref{eq:comp-imm}, the second from idempotency.
For condition 2, suppose $\stateS \trans[\labelL] \stateSi$ and $\hat{\procP} 
\trans[\labelL] \hat{\procPi}$. First notice that $\stateSi \in Inc$ and $\stateS \lessThan[\automatonV] 
\stateSi$. By inversion and a routine induction on $n$ (omitted) we can conclude that
$\hat{\procPi}$ is in one of the following shapes:
\begin{enumerate}
\item
\[
\hat{\procPi} = \procPi[1] \paral (\hdots \paral (\procPi[n] \paral \bang \procP))
\]
where there is non-empty $I \subseteq \{1,\hdots,n\}$ such that
$i \in I \implies \procP[i] \trans[\labelL] \procPi[i]$ and 
$i \not\in I \implies \procP[i] = \procPi[i]$;
\item
\[
\hat{\procPi} = \procPi[1] \paral (\hdots \paral (\procPi[n] \paral 
(\procPi \paral \bang \procP)))
\]
where there is (possibly empty) $I \subseteq \{1,\hdots,n\}$ such that
$\procP \trans[\labelL] \procPi$
$i \in I \implies \procP[i] \trans[\labelL] \procPi[i]$ and 
$i \not\in I \implies \procP[i] = \procPi[i]$;
\end{enumerate}
We show only case 2, which is slightly more involved. So, we have that, for all $i \in I:
\procQ[i] \trans[\labelL] \procQi[i]$ for some $\procQi[i]$ such that 
$\metric(\procPi[1],\procQi[1]) \lessThan[\automatonV] \stateSi$. Furthermore, there is $\procQi$
such that $\procQ \trans[\labelL] \procQi$ and $\metric(\procPi,\procQi) \lessThan[\automatonV] 
\stateSi$. So, let $\hat{\procQi} = \procQi[1] \paral (\hdots \paral (\procQi[n] \paral 
(\procQi \paral \bang \procQ)))$, where $i \not\in I \implies \procQ[i] = \procQi[i]$.
We have that $\hat{\procQ} \trans[\labelL] \hat{\procQi}$ and 
$\hat{\procPi} \relR_{\stateS} \hat{\procQi}$, as required.
\end{proof}
}

\section{Related Work \& Conclusion}
Quite a few works in the literature study context dependent relations. The 
closest to our work is the already mentioned study about environment
paremetrized bisimilarity \cite{Larsen87}. Our definition of CBM is similar to theirs, where the main differences are that we also consider
quantitative aspects and that we explicitly work with a metric. The same work also provides an interesting logical characterisation 
of their relation in terms of Hennessy-Milner logic, but does not study 
compositionality. Since environment parametrized bisimilarity can be 
embedded into our framework, our compositionality results also hold for \cite{Larsen87}. 
 A closely related line of research  \cite{BeoharKKS20,HulsbuschK12,Hulsbusch0KS20} (non-exhaustive list) studies conditional bisimulations in an abstract categorical framework,
where conditions are used to make assumptions on the environment. In particular, \cite{Hulsbusch0KS20} introduces a notion of conditional bisimilarity for reactive systems and shows that conditional bisimilarity is a congruence.
In \cite{HennessyL95}, an early and a late notion of symbolic bisimilarity for value passing processes are introduced, where actual values
are symbolically represented with boolean expressions with free variables. Symbolic bisimilarities are parametric w.r.t. a predicate
that, in a sense, allows to make assumptions on the values that the context can send. Our notion of contextuality instead restricts the 
choices of the environment, and we do not consider explicit value passing.

Compositionality of behavioural metrics has been studied in the probabilistic 
setting \cite{ChatzikokolakisGPX14,DesharnaisGJP04}. 
In \cite{BacciBLM13}, it has been shown that
parallel composition is non-extensive. We remark that our notion of parallel composition is slightly more general than
the one considered in \cite{BacciBLM13}, as in there processes necessarily synchronise on common actions. The work
\cite{GeblerLT16} studies compositionality for quite a few process algebraic operators, showing e.g. that non-deterministic sum is 
non-expansive, while parallel composition is non-extensive. The bang operator is shown Lipschitz continuous for the discounted metric, 
while not even uniformly continuous w.r.t. the non-discounted one. \cite{GeblerT18} introduces 
structural operational semantics formats that
guarantee compositionality of operators. Basically, compositionality depends on how many parameters of the operator are copied
from the source to the destination of the rules, weighted by probabilities and the discount factor.

\subparagraph*{Concluding Remarks.}
This paper introduces a new form of metric on the states of a LTS, called 
contextual behavioural metric, which enables contextual and quantitative 
reasoning. We study compositional properties of CBMs w.r.t. some operators, 
showing that, under the assumption that the immediate metric
is non-extensive, the following hold:
restriction is non-expansive, non-deterministic sum is non-extensive,
prefixing enjoys a property slightly weaker than uniform continuity,
parallel composition is non-extensive when the distance between components is 
compatible with the context and
replication enjoys non-expansiveness under some (rather strong) assumptions on 
the underling quantale $\quantaleQ$.

\highlight{Due to the generality of CBMs, our compositionality results extend
to behavioural metrics as defined in \cref{ex:beh-bisimilarity}. For instance, since
the compatibility relation of \cref{def:comp-relation} holds trivially for the MLTS of
behavioural metrics, we have that compositionality of parallel composition only depends
on the compositionality of the immediate metric.
}

Our work is still preliminary, and indeed we are yet in the quest for an appropriate general notion of compositionality: here we tried to adapt
concepts from the probabilistic setting \cite{GeblerLT16,GeblerT18}, where 
uniform continuity is considered as the most general notion of 
compositionality. In our setting not even prefixing enjoys uniform 
continuity, which should not come as a surprise, as quantales are not 
totally ordered in general. Our compositionality results have heterogeneous side conditions. Spelling out all the compositionality results in a uniform way would come with a high price: operators for which compositionality holds without any side condition, such as restriction, would have to be treated as those for which compositionality holds only modulo appropriate (and strong) hypotheses, such as replication. An interesting future work would be to infer the side conditions directly
from SOS rules, or studying more operators or rule formats as in \cite{GeblerT18}.

Another direction of future research would be to consider calculi with value and/or channel passing like the $\pi$-calculus: since strong
bisimilarity is not a congruence in such settings, a promising approach could be a ``contextualisation'' of open-bisimilarity \cite{Sangiori93}.

\bibliography{bibliography.bib}

\end{document}